\newcommand{\Exp}{\text{X}} 
\newcommand{\Nz}{\mathbb{N}_0} 
\newcommand{\N}{\mathbb{N}_+} 
\newcommand{\Qacc}{Q_{\textrm{acc}}}
\newcommand{\Qrej}{Q_{\textrm{rej}}}
\newcommand{\Z}{\mathbb{Z}} 
\newcommand{\domega}{\omega\omega} 
\newlength{\inaltlen}
\newcommand{\inalt}{%
  \mathpunct{}\mathord{\hbox{\vrule height \inaltlen \vbox to \inaltlen%
  {\hrule width \inaltlen height 0.4pt \vss%
  \hrule width \inaltlen height 0.4pt \vss%
  \hrule width \inaltlen height 0.4pt}}}%
}
\mathchardef\mhyphen="2D
\newclass{\Acl}{A}
\newclass{\Bcl}{B}
\newclass{\CAP}{CAP} 
\newclass{\CATIME}{CATIME} 
\newclass{\EXCAP}{EXCAP} 
\newclass{\EXPTIME}{EXPTIME} 
\newclass{\MARXCAP}{MAR\mhyphen XCAP} 
\newclass{\PTIME}{P} 
\newclass{\SCAP}{SCAP} 
\newclass{\SCATIME}{SCATIME} 
\newclass{\SXCAP}{SXCAP} 
\newclass{\SXCATIME}{SXCATIME} 
\newclass{\SimulNP}{SimulNP} 
\newclass{\VXCAP}{VXCAP} 
\newclass{\XCAP}{XCAP} 
\newclass{\XCATIME}{XCATIME} 
\newclass{\oCAP}{1CAP} 
\newclass{\oCATIME}{1CATIME} 
\newclass{\oXCAP}{1XCAP} 
\newclass{\oXCATIME}{1XCATIME} 
\newlang{\BOOL}{BOOL} 
\newlang{\STATE}{STATE} 
\newlang{\TAUT}{TAUT} 
\newlang{\TQBF}{TQBF} 
\newlang{\UNSAT}{UNSAT} 
\newcommand{\MARSTATEALL}{\STATE_\forall^\text{MAR}}
\newcommand{\SATTAUT}{\SAT^\land\mhyphen\TAUT^\lor}
\newcommand{\STATEALL}{\STATE_\forall}
\newcommand{\STATEONE}{\STATE_\exists}
\newcommand{\TAUTSAT}{\TAUT^\land\mhyphen\SAT^\lor}
\newcommand{\ttpNP}{{\le_{tt}^p}(\NP)} 
\begin{document}


\title{Complexity-Theoretic Aspects of Expanding Cellular Automata%
  \thanks{Parts of this paper have been submitted \autocite{BA, MA} in partial
fulfillment of the requirements for the degrees of Bachelor of Science and
Master of Science at the Karlsruhe Institute of Technology (KIT).}
}
\author{Augusto Modanese \\
  Karlsruhe Institute of Technology (KIT), Germany\\
  \texttt{modanese@kit.edu}
}
\date{}

\maketitle



\begin{abstract}
  The expanding cellular automata (XCA) variant of cellular automata is
investigated and characterized from a complexity-theoretical standpoint.
  An XCA is a one-dimensional cellular automaton which can dynamically create
new cells between existing ones.
  The respective polynomial-time complexity class is shown to coincide with
$\ttpNP$, that is, the class of decision problems polynomial-time truth-table
reducible to problems in $\NP$.
  An alternative characterization based on a variant of non-deterministic Turing
machines is also given.
  In addition, corollaries on select XCA variants are proven:
  XCAs with multiple accept and reject states are shown to be polynomial-time
equivalent to the original XCA model.
  Finally, XCAs with alternative acceptance conditions are considered and
classified in terms of $\ttpNP$ and the Turing machine polynomial-time class
$\PTIME$.
\end{abstract}

\section{Introduction}

Traditionally, cellular automata (CAs) are defined as a rigid and immutable
lattice of cells; their behavior is dictated exclusively by a local transition
function operating on homogeneous local configurations.
This can be generalized, for instance, by mutable neighborhoods
\autocite{rosenfeld81} or by endowing CAs with the ability to \emph{shrink},
that is, delete their cells \autocite{rosenfeld83}.
When shrinking, the automaton's structure and dimension are preserved by
\enquote{gluing} the severed parts and reconnecting their delimiting cells as
neighbors.
When employed as language recognizers, shrinking CAs (SCAs) can be more
efficient than standard CAs \autocite{rosenfeld83, kutrib15}.

Other variants of CAs with dynamically reconfigurable neighborhoods have
emerged throughout the years.
In the case of two-dimensional CAs, there is the structurally dynamical CA
(SDCA) due to \textcite{SDCA}, in which the connections between neighbors are
created and dropped depending on the local configuration.
In the one-dimensional case, further variants in this sense are considered
in the work of \textcite{dubacq_dyn_neighborhood}, where one finds, in
particular, CAs whose neighborhoods vary over time.
Dubacq also proposes the dynamically reconfigurable CA (DRCA), a CA whose cells
are able to exchange their neighbors for neighbors of their neighbors.
\textcite{dantchev_dyn_neighborhood} later points out a drawback in the
definition of DRCAs and proposes an alternative dubbed the dynamic neighborhood
CA (DNCA).

By relaxing the arrangement of cells as a lattice, CAs may be generalized to
graphs \autocite{graph_automata}.
Graph automata are related to CAs in that each vertex in the graph corresponds
to a cell; thus, graphs whose vertices have finite degrees provide a natural
generalization of CAs.
\textcite{graph_automata} also define a rule based on topological refinements of
graphs, which may be used as a model for biological cell division.
An additional example of cell division in this sense is the \enquote{inflating
grid} of \textcite{causal_graph_dyn}.

Modeling cell division and growth, in fact, was one of the driving motivations
towards the investigation of the \emph{expanding CA} (XCA) in \textcite{BA}.
An XCA is, in a way, the opposite of an SCA; instead of cells vanishing, new
cells can emerge between existing ones.
This operation is topologically similar to the cell division of graph automata;
as in the SCA model, however, it maintains the overall arrangement and
connectivity of the automaton's cells as similar as possible to that of standard
CAs (i.e., a bi-infinite, one-dimensional array of cells).

We mention a few aspects in which XCAs differ from the aforementioned variants.
Contrary to SDCAs or CAs with dynamic neighborhoods such as DRCAs and DNCAs,
XCAs enable the creation of new cells, not simply new links between existing
ones.
In addition, the XCA model does not focus as much on the reconfiguration of
cells; in it, the neighborhoods are homogeneous and predominantly immutable.
Furthermore, in contrast to the far more general graph automata, XCAs are still
one-dimensional CAs; this ensures basic CA
techniques (e.g., synchronization) function the same as they do in standard CAs.

Finally, shrinking and expanding are not mutually exclusive.
Combining them yields the shrinking and expanding CA (SXCA).
The polynomial-time class of SXCA language deciders was shown
to coincide with $\PSPACE$ \autocite{BA, SXCA}.

A previous result by \textcite{BA} is that, for the class $\XCAP$ of
polynomial-time XCA language deciders, we have $\NP \cup \coNP \subseteq
\XCAP \subseteq \PSPACE$.
A precise characterization of $\XCAP$, however, remained outstanding.
Such was the topic of the author's master's thesis \autocite{MA}, the results of
which are summarized in this paper.
The main result (Theorem~\ref{thm_XCAP_ttpNP}) is $\XCAP$ being equal to the
class of decision problems which are polynomial-time truth-table reducible to
$\NP$, denoted $\ttpNP$.

The rest of the paper is organized as follows:
Section~\ref{sec_definitions} covers the fundamental definitions and results
needed for the subsequent discussions.
Following that, Section~\ref{sec_XCAP} recalls the main result of \textcite{BA}
concerning $\XCAP$ and presents two characterizations of $\XCAP$, one based on
$\ttpNP$ (Theorem~\ref{thm_XCAP_ttpNP}) and another (Theorem~\ref{thm_simulNP})
based on a variant of non-deterministic Turing machines (NTMs).
Section \ref{sec_implications} covers some immediate corollaries, in particular
by considering an XCA variant with multiple accept and reject states as well as
two other variants with diverse acceptance conditions.
Finally, Section~\ref{sec_conclusion} concludes.

This is an extended and revised version of a preliminary paper presented at the
AUTOMATA 2019 \autocite{XCA}.
Section~\ref{sec_XCAP} has been expanded to provide a complete proof of
Proposition~\ref{prop_NP_coNP_in_XCAP} (instead of only an outline), which is
now found in Section~\ref{sec_XCAP_BA}, while the material in
Section~\ref{sec_syncNTM} is entirely novel.
Other improvements include a full proof of Lemma~\ref{lem_xca_number_cells},
an updated abstract, broader discussions of concepts and results, and minor
text edits.

\section{Definitions}
\label{sec_definitions}

This section recalls basic concepts and results needed for the proofs and
discussions in the later sections and is broken down in two parts.
The first is concerned with basic topics regarding formal languages, Turing
machines, and Boolean formulas.
The second part covers the definition of expanding CAs.

\subsection{Formal Languages and Turing Machines}

It is assumed the reader is familiar with the fundamentals of cellular automata
and complexity theory (see, e.g., standard references such as
\cite{delorme99_cellular_book} and \cite{arora09_computational_book}).
Unless stated otherwise, all words have length at least one.
For sets $A$ and $B$, $B^A$ denotes the set of functions $A \to B$.
For an alphabet $\Sigma$, $\Sigma^\ast$ is the set of words over $\Sigma$.
A \emph{$\domega$-word} is a biinfinite word $w\in \Sigma^\Z$.
The notion of a complete language is employed strictly in the sense of
polynomial-time many-one reductions by deterministic Turing machines.

\subsubsection{Boolean Formulas}

Let $V$ be a language of \emph{variables} over an alphabet $\Sigma$ which,
without loss of generality, is disjoint from $\{F, T, \lnot, \land, \lor, (,
)\}$.
$\BOOL_V$ denotes the formal language of Boolean formulas over the variables of
$V$.
For better readability, we shall prefer prefix notation when writing out
formulas (e.g., $\land( f, g )$ for formulas $f$ and $g$ instead of the more
common infix notation $f \land g$).

An \emph{interpretation} of $V$ is a map $I\colon V \to \{F, T\}$.
Each such $I$ gives rise to an \emph{evaluation} $E_I\colon \BOOL_V \to \{F,
T\}$ which, given a formula $f \in \BOOL_V$, substitutes each variable
$x \in V$ with the truth value $I(x)$ and reduces the resulting formula using
standard propositional logic.
A formula $f$ is \emph{satisfiable} if there is an interpretation $I$ such that
$E_I(f) = T$, and $f$ is a \emph{tautology} if this holds for every $I$.

In order to define the languages $\SAT$ of satisfiable formulas and $\TAUT$ of
tautologies, a language $V$ of variables must first be agreed on.
In this paper, variables are encoded as binary strings prefixed by a special
symbol $x$, that is, $V = \{x\} \cdot \{0, 1\}^+$.
The language $\SAT$ contains exactly the satisfiable formulas of $\BOOL_V$.
Similarly, $\TAUT$ contains exactly the tautologies of $\BOOL_V$.
The following is a classical result concerning $\SAT$ and $\TAUT$:
\begin{theorem}[\cite{SATNPcomplete}]
  \label{thm_sat_np_complete}
  $\SAT$ is $\NP$-complete, and $\TAUT$ is $\coNP$-complete.
\end{theorem}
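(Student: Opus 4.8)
The plan is to prove the two assertions separately: the first is the Cook--Levin theorem, and the second follows from it by a short duality argument.

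\emph{$\SAT \in \NP$.} A nondeterministic machine, given $f \in \BOOL_V$, guesses an interpretation $I$ of the (at most $|f|$ many) variables occurring in $f$ --- each having a name of length at most $|f|$, so $I$ has description size $\BigO(|f|^2)$ --- and then computes $E_I(f)$ by a recursive descent over the parse tree of $f$, accepting iff the result is $T$. Since $E_I$ is computable in deterministic polynomial time, this places $\SAT$ in $\NP$; membership of $\TAUT$ in $\coNP$ is symmetric, guessing instead a falsifying interpretation.

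\emph{$\NP$-hardness of $\SAT$.} Let $L \in \NP$ be decided by a nondeterministic single-tape Turing machine $M$ within time $p(n)$ for a polynomial $p$. For an input $w$ with $|w| = n$, I would construct a formula $\varphi_{M,w}$ encoding an accepting computation tableau of $M$ on $w$, i.e.\ a $(p(n)+1) \times (p(n)+1)$ array whose $(t,i)$-th cell records the tape contents of position $i$ at time $t$ together with the head position and, where applicable, the state. There is one Boolean variable per (cell, admissible content) pair --- $\BigO(p(n)^2)$ in total --- each named by the symbol $x$ followed by the binary encoding of its index, so that all variables lie in $V = \{x\} \cdot \{0,1\}^+$ as required. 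The formula $\varphi_{M,w}$ is the conjunction of polynomially many constant-size subformulas asserting that (i) each cell has exactly one content; (ii) row $0$ spells out the initial configuration of $M$ on $w$; (iii) every $2 \times 3$ window spanning two consecutive rows is consistent with some transition of $M$ --- here the locality of Turing-machine steps is used, and nondeterminism is accommodated by permitting any legal window; and (iv) some row contains an accepting state. Then $|\varphi_{M,w}| = \mathrm{poly}(n)$ and $w \mapsto \varphi_{M,w}$ is computable by a deterministic polynomial-time transducer, while a routine induction on $t$ shows that the satisfying assignments of $\varphi_{M,w}$ are exactly the accepting computations of $M$ on $w$; hence $w \in L \iff \varphi_{M,w} \in \SAT$, and, combined with $\SAT \in \NP$, this yields $\NP$-completeness.

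\emph{$\TAUT$ is $\coNP$-complete, and the main obstacle.} A formula $f$ is a tautology exactly when $\lnot f$ is unsatisfiable, so the polynomial-time map $f \mapsto \lnot(f)$ reduces $\TAUT$ to the complement of $\SAT$ relative to $\BOOL_V$, and the identical map reduces that complement back to $\TAUT$; since $\SAT$ is $\NP$-complete, its complement is $\coNP$-complete, and therefore so is $\TAUT$. The one genuinely delicate point in all of this is the design and correctness proof of the window-consistency subformula (iii) --- showing that local consistency of every window is equivalent to global validity of the tableau --- together with the tedious but conceptually empty verification that the whole construction, including the ad hoc renaming of the tableau variables into $\{x\} \cdot \{0,1\}^+$, is carried out in deterministic polynomial time. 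Since neither point is deep, the result is simply cited here.
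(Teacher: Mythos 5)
The paper does not prove this theorem at all --- it is the classical Cook--Levin result, stated with a citation only. Your sketch is the standard argument (guess-and-verify for membership, the tableau/window construction for $\NP$-hardness of $\SAT$, and the $f \mapsto \lnot(f)$ duality for $\coNP$-completeness of $\TAUT$), it is correct in outline, and your closing decision to defer the routine details to the literature matches exactly what the paper does.
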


\subsubsection{Truth-Table Reductions}

The theory of truth-table reductions was established by \textcite{ttpNP_I,
ttpNP_II}.
Later, \textcite{ttpNP_eq_bfpNP} showed the class of decision problems
polynomial-time truth-table (i.e., Boolean-circuit) reducible to $\NP$, denoted
$\ttpNP$, remains the same even if the reduction is in terms of Boolean formulas
(instead of circuits).
We refer to \textcite{Buss_ttpNP} for a series of alternative characterizations
of $\ttpNP$.
The inclusions $\NP \cup \coNP \subseteq \ttpNP$ and $\ttpNP \subseteq \PSPACE$
are known to hold.

A more formal treatment of the class $\ttpNP$ is not necessary to establish the
results of this paper; it suffices to note $\ttpNP$ has complete languages.
In particular, we are interested in Boolean formulas with $\NP$ and $\coNP$
predicates.
To this end, we employ $\SAT$ and $\TAUT$ to define membership predicates of
the form $f \inalt_L$, where $f$ is a Boolean formula, $L \in \{ \SAT, \TAUT
\}$, and \enquote{$\inalt_L$} is a purely syntactic construct which stands for
the statement \enquote{$f \in L$}.

\begin{definition}[$\SATTAUT$]
  Let $V = \{x\} \cdot \{0,1\}^+$ and $V_L = \BOOL_V \cdot \{\inalt_L\}$ for $L
\in \{\SAT, \TAUT\}$.
  The language $\BOOL^{\land\lor}_{\SAT,\TAUT} \subseteq \BOOL_{V_\SAT \cup
V_\TAUT}$ is defined recursively as follows:
  \begin{enumerate}
    \item $V_\SAT, V_\TAUT \subseteq \BOOL^{\land\lor}_{\SAT,\TAUT}$.
    \item For $v \in V_\SAT$ and $f \in \BOOL^{\land\lor}_{\SAT,\TAUT}$,
      $\land(v, f) \in \BOOL^{\land\lor}_{\SAT,\TAUT}$.
    \item For $v \in V_\TAUT$ and $f \in \BOOL^{\land\lor}_{\SAT,\TAUT}$,
      $\lor(v, f) \in \BOOL^{\land\lor}_{\SAT,\TAUT}$.
  \end{enumerate}
  The language $\SATTAUT \subseteq \BOOL^{\land\lor}_{\SAT,\TAUT}$ contains all
formulas which are true under the interpretation mapping $f \inalt_L$ to the
truth value of the statement \enquote{$f \in L$}.
  \label{def_sattaut}
\end{definition}

For example, given $f_1, f_2, f_3, f_4 \in \BOOL_V$, the following formula $f$
is in $\BOOL^{\land\lor}_{\SAT,\TAUT}$:
\[
  f = \land( f_1 \inalt_\SAT,
      \lor( f_2 \inalt_\TAUT,
      \land(f_3 \inalt_\SAT, f_4 \inalt_\TAUT))).
\]
Then, $f \in \SATTAUT$ if, for instance, $f_1 \in \SAT$ and $f_2 \in \TAUT$
holds.

From the results of \textcite{Buss_ttpNP} it follows:
\begin{theorem}
  $\SATTAUT$ is $\ttpNP$-complete.
  \label{thm_tautsat_ttpNP}
\end{theorem}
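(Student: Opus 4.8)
The plan is to prove the two directions of completeness separately: membership $\SATTAUT \in \ttpNP$, which is essentially immediate from the definitions, and $\ttpNP$-hardness of $\SATTAUT$, where the real content lies.

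\emph{Membership.} Given $f \in \BOOL^{\land\lor}_{\SAT,\TAUT}$, one can in polynomial time collect the list of atomic subformulas occurring in $f$, say $\phi_1 \inalt_\SAT, \dots, \phi_a \inalt_\SAT$ and $\psi_1 \inalt_\TAUT, \dots, \psi_b \inalt_\TAUT$. Whether $f \in \SATTAUT$ is then a fixed Boolean function of the bits $[\phi_i \in \SAT]$ and $[\psi_j \in \TAUT]$, namely the function computed by $f$ itself with these atoms treated as its variables. The queries ``$\phi_i \in \SAT$'' are $\NP$ queries (Theorem~\ref{thm_sat_np_complete}), and the queries ``$\psi_j \in \TAUT$'' are $\coNP$ queries, i.e.\ negations of the $\NP$ queries ``$\lnot\psi_j \in \SAT$'' (since $\lnot\psi_j \in \TAUT \iff \psi_j \notin \SAT$). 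Hence $f \mapsto (f;\, \phi_1, \dots, \phi_a,\, \lnot\psi_1, \dots, \lnot\psi_b)$ is a polynomial-time Boolean-formula truth-table reduction of $\SATTAUT$ to $\SAT \in \NP$, so $\SATTAUT \in \ttpNP$ by \textcite{ttpNP_eq_bfpNP}; alternatively, membership follows directly because $\ttpNP$ is closed under polynomial-time Boolean combinations of $\NP$ and $\coNP$ predicates.

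\emph{Hardness.} I would reduce from a $\ttpNP$-complete problem in ``mind-change'' form supplied by \textcite{Buss_ttpNP} --- reflecting the identity $\ttpNP = \PTIME^{\NP}[\BigO(\log n)]$ --- for instance the following: the instances are sequences $(\phi_1, \dots, \phi_m)$ of Boolean formulas carrying a syntactic guarantee that $\phi_{i+1} \in \SAT$ implies $\phi_i \in \SAT$ (so that $([\phi_i \in \SAT])_{1 \le i \le m}$ has the shape $1^k 0^{m-k}$), and such a sequence is a yes-instance iff $k = \lvert\{\,i : \phi_i \in \SAT\,\}\rvert$ is odd. After renaming variables so that every $\phi_i \in \BOOL_V$, the reduction outputs the ``right comb''
\[
  \land\!\bigl(\phi_1 \inalt_\SAT,\;
    \lor\!\bigl(\lnot\phi_2 \inalt_\TAUT,\;
      \land\!\bigl(\phi_3 \inalt_\SAT,\;
        \lor\!\bigl(\lnot\phi_4 \inalt_\TAUT,\; \dotsc \bigr)\bigr)\bigr)\bigr)
\]
truncated after the $m$-th atom, using $\lnot\phi_{2j} \in \TAUT \iff \phi_{2j} \notin \SAT$. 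In this formula every $\land$ has an element of $V_\SAT$ as its left argument and every $\lor$ an element of $V_\TAUT$, and the innermost atom stands alone; hence it belongs to $\BOOL^{\land\lor}_{\SAT,\TAUT}$ by Definition~\ref{def_sattaut}, and it is plainly polynomial-time computable. Under the evaluation of Definition~\ref{def_sattaut} it reduces to $b_1 \land (\lnot b_2 \lor (b_3 \land (\lnot b_4 \lor \dotsb)))$ with $b_i = [\phi_i \in \SAT]$; since $(b_i)_i = 1^k 0^{m-k}$, a short case distinction on $k$ shows this equals $T$ precisely when $k$ is odd. Thus the comb lies in $\SATTAUT$ iff the input sequence is a yes-instance, establishing $\ttpNP$-hardness.

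\emph{The main obstacle} is the hardness direction, and specifically fitting the very restricted comb shape of $\BOOL^{\land\lor}_{\SAT,\TAUT}$ --- in which $\lor$ is never applied to a $\SAT$-atom and $\land$ is never applied to a $\TAUT$-atom --- to a known complete problem. A generic monotone $\{\land,\lor\}$-combination of $\SAT$-atoms is \emph{not} expressible in this form, so one genuinely needs the linear, nested structure of a mind-change problem; locating (or, if necessary, re-deriving from \textcite{Buss_ttpNP}) exactly such a complete problem is the one nontrivial step. Given it, the comb reduction above and the membership argument are routine, modulo minor bookkeeping for the parity of $m$ and for degenerate inputs (handled by appending a trivially (un)satisfiable formula).
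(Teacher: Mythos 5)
Your argument is correct, but note that the paper itself offers no proof of this theorem: it simply asserts that the statement \enquote{follows from the results of} \textcite{Buss_ttpNP} and moves on. So your proposal fills in a derivation where the paper has only a citation, and the derivation you give is a reasonable way to do it. The membership half is the standard observation that $\SATTAUT$ is a polynomial-time Boolean-formula combination of $\SAT$-queries (on the $\phi_i$ and on the $\lnot\psi_j$), hence in $\ttpNP$ by \textcite{ttpNP_eq_bfpNP}; one tiny slip is that the parenthetical identity you need there is $\psi_j \in \TAUT \iff \lnot\psi_j \notin \SAT$ rather than the symmetric variant you wrote, which is harmless. The hardness half correctly isolates the real issue, namely that the grammar of Definition~\ref{def_sattaut} only admits a linear \enquote{comb}, so one cannot reduce from an arbitrary monotone circuit over $\SAT$-atoms and must instead start from a mind-change-style complete problem. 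That problem (monotone sequences $\phi_1, \dots, \phi_m$ with $\phi_{i+1} \in \SAT \Rightarrow \phi_i \in \SAT$, accept iff the count of satisfiable ones is odd) is indeed $\ttpNP$-complete --- it follows from $\ttpNP = \PTIME^{\NP}[\BigO(\log n)]$ together with the usual cardinality trick of replacing arbitrary $\phi_j$ by $\psi_i = $ \enquote{at least $i$ of the $\phi_j$ are satisfiable} --- and your comb evaluates to $b_1 \land (\lnot b_2 \lor (b_3 \land \dotsb))$ with $b_i = [\phi_i \in \SAT]$, which on inputs of shape $1^k 0^{m-k}$ is true exactly when $k$ is odd, for either parity of $m$ and for $k = 0$. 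In short: your proof is sound modulo the cited completeness of the mind-change problem, and it makes explicit precisely the step the paper delegates wholesale to \textcite{Buss_ttpNP}.
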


\subsection{Cellular Automata}
\label{sec_CA}

Here, we are strictly interested in one-dimensional cellular automata (CAs) with
the standard neighborhood and employed as language deciders.
CA deciders possess a \emph{quiescent state} $q$; cells which are not in this
state are said to be \emph{active} and may not become quiescent.
The input for a CA decider is provided in its initial configuration surrounded
by quiescent cells.
As deciders, CAs are Turing complete, and, more importantly, CAs can simulate
TMs in real-time \autocite{smith71}.
Conversely, it is known a TM can simulate a CA with time complexity $t$ in time
at most $t^2$.
A corollary is that the class of problems decidable in polynomial time by CAs
is exactly $\PTIME$.

\subsubsection{Expanding Cellular Automata}

First considered in \textcite{BA}, the expanding CA (XCA) is similar to the
shrinking CA (SCA) in that it is dynamically reconfigurable; instead of cells
being deleted, however, in an XCA new cells emerge between existing ones.
This does not alter the underlying topology, which remains one-dimensional and
biinfinite.

For modeling purposes, the new cells are seen as \emph{hidden} between the
original (i.e., \emph{visible}) ones, with one hidden cell placed between any
two neighboring visible cells.
These latter cells serve as the hidden cell's left and right neighbors and are
referred to as its \emph{parents}.
In each CA step, a hidden cell observes the states of its parents and either
assumes a non-hidden state, thus becoming visible, or remains hidden.
In the former case, the cell assumes the position between its parents and
becomes an ordinary cell (i.e., visible), and the parents are reconnected so as
to adopt the new cell as a neighbor.
Visible cells may not become hidden, and we refer to hidden cells neither as
active nor as quiescent (i.e., we treat them as a \emph{tertium quid}).

\begin{definition}[XCA]
  Let $N = \{ -1, 0, 1 \}$ be the standard neighborhood.
  An \emph{expanding CA} (\emph{XCA}) is a CA $A$ with state set $Q$ and
local transition function $\delta\colon Q^N \to Q$ and which possesses a
distinguished \emph{hidden state} $\odot \in Q$.
  For any local configuration $\ell\colon N \to Q$, $\delta(\ell) = \odot$ is
allowed only if $\ell(0) = \odot$.

For a global configuration $c\colon \Z \to Q$, let $h_c\colon \Z \to Q^N$ be
such that $h_c(z)(-1) = c(z)$, $h_c(z)(0) = \odot$, and $h_c(z)(1) = c(z + 1)$
for any $z \in \Z$.
  Define $\alpha\colon Q^\Z \to Q^\Z$ as follows, where $\Delta$ is the standard
CA global transition function (as induced by $\delta$):
  \[
    \alpha(c)(z) = \begin{cases}
      \Delta(c)(\frac{z}{2}), & \text{$z$ even} \\
      \delta(h_c(\frac{z-1}{2})), & \text{otherwise.}
    \end{cases}
  \]
  Finally, with $c$ still arbitrary, let $\Phi\colon Q^\Z \to Q^\Z$ be the map%
  \footnote{Strictly speaking, the codomain of $\Phi$ (as here defined) is not
    only $Q^\Z$ but actually larger (since, for $c \in Q^\Z$ arbitrary,
    $\Phi(c)(z)$ may be undefined for certain $z \in \Z$).
    However, since the configurations that arise in our context of XCAs have
    infinitely many occurrences of $q$ in either direction (i.e., $c(i) = q$
    holds for infinitely many $i > 0$ as well as infinitely many $i < 0$), in
    this case $\Phi(c)(z)$ is guaranteed to be defined for every $z \in \Z$,
    that is, $\Phi(c) \in Q^\Z$.
    Hence, to simplify the presentation, we write only \enquote{$Q^\Z$} here.
  }
  that acts as a homomorphism on $c$ deleting any occurrence of $\odot$
  (and contracting the remaining states towards zero), formally:
  \[
    \Phi(c)(z) = \begin{cases}
      c(m_+(z)), & z \ge 0 \\
      c(m_-(-z-1)), & \text{otherwise}
    \end{cases}
  \]
  where $m_+(z)$ is the maximum $i \in \Z$ for which
  $| \{ j \in [0,i) \mid c(j) \neq \odot \} | = z$
  and $m_-(z)$ is the minimum $i \in \Z$ for which
  $| \{ j \in (i,-1] \mid c(j) \neq \odot \} | = z$.
  Then the global transition function of $A$ is $\Delta^\Exp = \Phi \circ
\alpha$.
\label{def_xca}
\end{definition}

Figure \ref{fig_ex_XCA} illustrates an XCA $A$ and its operation for input
$001010$ as an example.
The local transition function $\delta$ of $A$ is as follows:
\[
  \delta(q_{-1}, q_0, q_1) = \begin{cases}
    q_{-1} \oplus q_1, & q_{-1}, q_1 \in \{0,1\} \\
    q_0, & \text{otherwise}
  \end{cases}
\]
where $\oplus$ denotes the bitwise XOR operation, that is, addition modulo $2$.
The initial configuration is marked as $c$.
The hidden cells are those in state $\odot$.
Starting from $c$, $\alpha$ applies $\delta$ to each local configuration, where
$h_c$ specifies the local configurations for the hidden cells; $\alpha$ also
promotes all originally hidden cells to visible ones.
Finally, $\Phi$ then eliminates cells having the state $\odot$, as such cells
are per definition not allowed to be visible (rather, they are present only
implicitly in the global configuration).

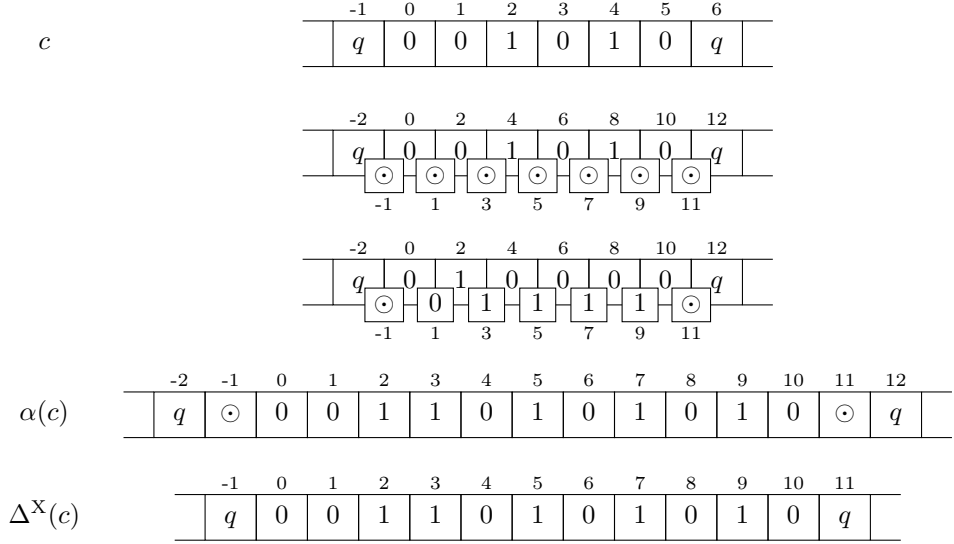
\begin{figure*}
  \centering
  \begin{tikzpicture}[every node/.style={block},
      block/.style={
        minimum width=width("$\odot$") + 4mm,
        minimum height=height("$\odot$") + 4mm,
        text height=1.2ex,
        text depth=.2ex,
        outer sep=0pt,
        draw, rectangle, node distance=0pt}]

    \tikzstyle{h} = [
        minimum width=width("$\odot$") + 2mm,
        minimum height=height("$\odot$") + 2mm,
        fill=white, rectangle
    ]

    \newlength{\rowspc}
    \setlength{\rowspc}{3.15em}

    \node    (z0)                            {$q$};
    \node    (z1)  [right=of z0]             {$0$};
    \node    (z2)  [right=of z1]             {$0$};
    \node    (z3)  [right=of z2]             {$1$};
    \node    (z4)  [right=of z3]             {$0$};
    \node    (z5)  [right=of z4]             {$1$};
    \node    (z6)  [right=of z5]             {$0$};
    \node    (z7)  [right=of z6]             {$q$};
    \draw    (z0.north west) -- ++(-0.4cm,0)
             (z0.south west) -- ++(-0.4cm,0)
             (z7.north east) -- ++(0.4cm,0)
             (z7.south east) -- ++(0.4cm,0);
    \node[draw=none] at (z0.north) [yshift=1.5mm] {\scriptsize -1};
    \node[draw=none] at (z1.north) [yshift=1.5mm] {\scriptsize 0};
    \node[draw=none] at (z2.north) [yshift=1.5mm] {\scriptsize 1};
    \node[draw=none] at (z3.north) [yshift=1.5mm] {\scriptsize 2};
    \node[draw=none] at (z4.north) [yshift=1.5mm] {\scriptsize 3};
    \node[draw=none] at (z5.north) [yshift=1.5mm] {\scriptsize 4};
    \node[draw=none] at (z6.north) [yshift=1.5mm] {\scriptsize 5};
    \node[draw=none] at (z7.north) [yshift=1.5mm] {\scriptsize 6};

    \node    (a0)  [below=of z0, yshift=-2.4em] {$q$};
    \node    (a1)  [right=of a0]                 {$0$};
    \node    (a2)  [right=of a1]                 {$0$};
    \node    (a3)  [right=of a2]                 {$1$};
    \node    (a4)  [right=of a3]                 {$0$};
    \node    (a5)  [right=of a4]                 {$1$};
    \node    (a6)  [right=of a5]                 {$0$};
    \node    (a7)  [right=of a6]                 {$q$};
    \node[h] (a01) [at=     (a0.south east)]     {$\odot$};
    \node[h] (a12) [at=     (a1.south east)]     {$\odot$};
    \node[h] (a23) [at=     (a2.south east)]     {$\odot$};
    \node[h] (a34) [at=     (a3.south east)]     {$\odot$};
    \node[h] (a45) [at=     (a4.south east)]     {$\odot$};
    \node[h] (a56) [at=     (a5.south east)]     {$\odot$};
    \node[h] (a67) [at=     (a6.south east)]     {$\odot$};
    \draw    (a0.north west) -- ++(-0.4cm,0)
             (a0.south west) -- ++(-0.4cm,0)
             (a7.north east) -- ++(0.4cm,0)
             (a7.south east) -- ++(0.4cm,0);
    \node[draw=none] at (a0.north) [yshift=1.5mm] {\scriptsize -2};
    \node[draw=none] at (a1.north) [yshift=1.5mm] {\scriptsize 0};
    \node[draw=none] at (a2.north) [yshift=1.5mm] {\scriptsize 2};
    \node[draw=none] at (a3.north) [yshift=1.5mm] {\scriptsize 4};
    \node[draw=none] at (a4.north) [yshift=1.5mm] {\scriptsize 6};
    \node[draw=none] at (a5.north) [yshift=1.5mm] {\scriptsize 8};
    \node[draw=none] at (a6.north) [yshift=1.5mm] {\scriptsize 10};
    \node[draw=none] at (a7.north) [yshift=1.5mm] {\scriptsize 12};
    \node[draw=none] at (a01.south) [yshift=-1.5mm] {\scriptsize -1};
    \node[draw=none] at (a12.south) [yshift=-1.5mm] {\scriptsize 1};
    \node[draw=none] at (a23.south) [yshift=-1.5mm] {\scriptsize 3};
    \node[draw=none] at (a34.south) [yshift=-1.5mm] {\scriptsize 5};
    \node[draw=none] at (a45.south) [yshift=-1.5mm] {\scriptsize 7};
    \node[draw=none] at (a56.south) [yshift=-1.5mm] {\scriptsize 9};
    \node[draw=none] at (a67.south) [yshift=-1.5mm] {\scriptsize 11};

    \node    (b0)  [below=of a0, yshift=-\rowspc] {$q$};
    \node    (b1)  [right=of b0]                  {$0$};
    \node    (b2)  [right=of b1]                  {$1$};
    \node    (b3)  [right=of b2]                  {$0$};
    \node    (b4)  [right=of b3]                  {$0$};
    \node    (b5)  [right=of b4]                  {$0$};
    \node    (b6)  [right=of b5]                  {$0$};
    \node    (b7)  [right=of b6]                  {$q$};
    \node[h] (b01) [at=     (b0.south east)]      {$\odot$};
    \node[h] (b12) [at=     (b1.south east)]      {$0$};
    \node[h] (b23) [at=     (b2.south east)]      {$1$};
    \node[h] (b34) [at=     (b3.south east)]      {$1$};
    \node[h] (b45) [at=     (b4.south east)]      {$1$};
    \node[h] (b56) [at=     (b5.south east)]      {$1$};
    \node[h] (b67) [at=     (b6.south east)]      {$\odot$};
    \draw    (b0.north west) -- ++(-0.4cm,0)
             (b0.south west) -- ++(-0.4cm,0)
             (b7.north east) -- ++(0.4cm,0)
             (b7.south east) -- ++(0.4cm,0);
    \node[draw=none] at (b0.north) [yshift=1.5mm] {\scriptsize -2};
    \node[draw=none] at (b1.north) [yshift=1.5mm] {\scriptsize 0};
    \node[draw=none] at (b2.north) [yshift=1.5mm] {\scriptsize 2};
    \node[draw=none] at (b3.north) [yshift=1.5mm] {\scriptsize 4};
    \node[draw=none] at (b4.north) [yshift=1.5mm] {\scriptsize 6};
    \node[draw=none] at (b5.north) [yshift=1.5mm] {\scriptsize 8};
    \node[draw=none] at (b6.north) [yshift=1.5mm] {\scriptsize 10};
    \node[draw=none] at (b7.north) [yshift=1.5mm] {\scriptsize 12};
    \node[draw=none] at (b01.south) [yshift=-1.5mm] {\scriptsize -1};
    \node[draw=none] at (b12.south) [yshift=-1.5mm] {\scriptsize 1};
    \node[draw=none] at (b23.south) [yshift=-1.5mm] {\scriptsize 3};
    \node[draw=none] at (b34.south) [yshift=-1.5mm] {\scriptsize 5};
    \node[draw=none] at (b45.south) [yshift=-1.5mm] {\scriptsize 7};
    \node[draw=none] at (b56.south) [yshift=-1.5mm] {\scriptsize 9};
    \node[draw=none] at (b67.south) [yshift=-1.5mm] {\scriptsize 11};

    \tikzstyle{s} = [
        minimum width=width("$\odot$") + 4mm,
        minimum height=height("$\odot$") + 4mm
    ]

    \node[s] (c34) [below=of b34, yshift=-\rowspc+0.5em]  {$1$};
    \node[s] (c4)  [right=of c34]                   {$0$};
    \node[s] (c45) [right=of c4]                    {$1$};
    \node[s] (c5)  [right=of c45]                   {$0$};
    \node[s] (c56) [right=of c5]                    {$1$};
    \node[s] (c6)  [right=of c56]                   {$0$};
    \node[s] (c67) [right=of c6]                    {$\odot$};
    \node[s] (c7)  [right=of c67]                   {$q$};
    \node[s] (c3)  [left=of  c34]                   {$0$};
    \node[s] (c23) [left=of  c3]                    {$1$};
    \node[s] (c2)  [left=of  c23]                   {$1$};
    \node[s] (c12) [left=of  c2]                    {$0$};
    \node[s] (c1)  [left=of  c12]                   {$0$};
    \node[s] (c01) [left=of  c1]                    {$\odot$};
    \node[s] (c0)  [left=of  c01]                   {$q$};
    \draw    (c0.north west) -- ++(-0.4cm,0)
             (c0.south west) -- ++(-0.4cm,0)
             (c7.north east) -- ++(0.4cm,0)
             (c7.south east) -- ++(0.4cm,0);
    \node[draw=none] at (c0.north) [yshift=1.5mm] {\scriptsize -2};
    \node[draw=none] at (c01.north) [yshift=1.5mm] {\scriptsize -1};
    \node[draw=none] at (c1.north) [yshift=1.5mm] {\scriptsize 0};
    \node[draw=none] at (c12.north) [yshift=1.5mm] {\scriptsize 1};
    \node[draw=none] at (c2.north) [yshift=1.5mm] {\scriptsize 2};
    \node[draw=none] at (c23.north) [yshift=1.5mm] {\scriptsize 3};
    \node[draw=none] at (c3.north) [yshift=1.5mm] {\scriptsize 4};
    \node[draw=none] at (c34.north) [yshift=1.5mm] {\scriptsize 5};
    \node[draw=none] at (c4.north) [yshift=1.5mm] {\scriptsize 6};
    \node[draw=none] at (c45.north) [yshift=1.5mm] {\scriptsize 7};
    \node[draw=none] at (c5.north) [yshift=1.5mm] {\scriptsize 8};
    \node[draw=none] at (c56.north) [yshift=1.5mm] {\scriptsize 9};
    \node[draw=none] at (c6.north) [yshift=1.5mm] {\scriptsize 10};
    \node[draw=none] at (c67.north) [yshift=1.5mm] {\scriptsize 11};
    \node[draw=none] at (c7.north) [yshift=1.5mm] {\scriptsize 12};

    \node[s] (d34) [below=of c34, yshift=-\rowspc+1em]  {$1$};
    \node[s] (d4)  [right=of d34]                   {$0$};
    \node[s] (d45) [right=of d4]                    {$1$};
    \node[s] (d5)  [right=of d45]                   {$0$};
    \node[s] (d56) [right=of d5]                    {$1$};
    \node[s] (d6)  [right=of d56]                   {$0$};
    \node[s] (d7)  [right=of d6]                    {$q$};
    \node[s] (d3)  [left=of  d34]                   {$0$};
    \node[s] (d23) [left=of  d3]                    {$1$};
    \node[s] (d2)  [left=of  d23]                   {$1$};
    \node[s] (d12) [left=of  d2]                    {$0$};
    \node[s] (d1)  [left=of  d12]                   {$0$};
    \node[s] (d0)  [left=of  d1]                    {$q$};
    \draw    (d0.north west) -- ++(-0.4cm,0)
             (d0.south west) -- ++(-0.4cm,0)
             (d7.north east) -- ++(0.4cm,0)
             (d7.south east) -- ++(0.4cm,0);
    \node[draw=none] at (d0.north) [yshift=1.5mm] {\scriptsize -1};
    \node[draw=none] at (d1.north) [yshift=1.5mm] {\scriptsize 0};
    \node[draw=none] at (d12.north) [yshift=1.5mm] {\scriptsize 1};
    \node[draw=none] at (d2.north) [yshift=1.5mm] {\scriptsize 2};
    \node[draw=none] at (d23.north) [yshift=1.5mm] {\scriptsize 3};
    \node[draw=none] at (d3.north) [yshift=1.5mm] {\scriptsize 4};
    \node[draw=none] at (d34.north) [yshift=1.5mm] {\scriptsize 5};
    \node[draw=none] at (d4.north) [yshift=1.5mm] {\scriptsize 6};
    \node[draw=none] at (d45.north) [yshift=1.5mm] {\scriptsize 7};
    \node[draw=none] at (d5.north) [yshift=1.5mm] {\scriptsize 8};
    \node[draw=none] at (d56.north) [yshift=1.5mm] {\scriptsize 9};
    \node[draw=none] at (d6.north) [yshift=1.5mm] {\scriptsize 10};
    \node[draw=none] at (d7.north) [yshift=1.5mm] {\scriptsize 11};

    \node (alphac) [draw=none, left=of c0, xshift=-1cm] {$\alpha(c)$};
    \node (c)      [draw=none, at=(alphac |- z0)]       {$c$};
    \node (deltac) [draw=none, at=(alphac |- d0)]       {$\Delta^\Exp(c)$};
  \end{tikzpicture}
  \caption{Illustration of a step of the XCA $A$.
    The number next to each cell indicates its index in the respective
    configuration.
  }
  \label{fig_ex_XCA}
\end{figure*}

The supply of hidden cells is never depleted; whenever a hidden cell becomes
visible, new hidden cells appear between it and its neighbors.
Thus, the number of active cells in an XCA may increase exponentially:

\begin{lemma}
  Let $A$ be an XCA.
  For an input of size $n$, $A$ has at most $a(t) = (n + 3)2^t - 3$ active cells
after $t \in \Nz$ steps.
  This upper bound is sharp.
  \label{lem_xca_number_cells}
\end{lemma}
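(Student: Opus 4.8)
The plan is to establish the upper bound and its sharpness separately.

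\emph{Upper bound.} Instead of counting active cells directly, I would track the \emph{span} $s_t$ after $t$ steps, i.e., the number of cell positions from the leftmost active cell to the rightmost one, both included; the number of active cells after $t$ steps never exceeds $s_t$, and $s_0 \le n$ since the input occupies $n$ cells. (If no active cell ever occurs the claim is vacuous, and since active cells may not become quiescent, once one is present one remains, so $s_t$ is defined for all $t$.) The crux is the recurrence $s_t \le 2 s_{t-1} + 3$ for $t \ge 1$. One step $\Delta^\Exp = \Phi \circ \alpha$ acts, via $\alpha$, by sending the cell with index $z$ to index $2z$ with a state depending only on the old indices $z-1, z, z+1$, and by inserting a new cell at index $2z+1$ with a state depending only on the old indices $z, z+1$; afterwards $\Phi$ merely deletes the cells that are still hidden and contracts the rest towards zero. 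Since $\delta(q,q,q) = q$ and a hidden cell with two quiescent parents does not become active --- the quiescent region of an XCA decider being preserved (cf.\ the footnote to Definition~\ref{def_xca}) --- a cell produced by $\alpha$ can be active only if one of the old cells it reads is active. Hence, if every active cell has index in $[l,r]$ before the step, then after $\alpha$ every active cell of even index has index in $[2l-2, 2r+2]$ and every one of odd index has index in $[2l-1, 2r+1]$, so all of them have index in $[2l-2, 2r+2]$, an interval of $2(r-l) + 5 = 2 s_{t-1} + 3$ indices; since $\Phi$ neither widens the span nor changes the number of active cells, $s_t \le 2 s_{t-1} + 3$. Solving this with $s_0 \le n$ yields $s_t \le (n+3)2^t - 3 = a(t)$, which bounds the number of active cells.

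\emph{Sharpness.} Here I would exhibit a single XCA that meets the bound on the inputs $1^n$. Let $Q = \{q, \odot, 1\}$ with $q$ quiescent and $1$ the sole active state, and let $\delta$ map the all-quiescent local configuration to $q$, the local configuration $(q, \odot, q)$ to $\odot$, and every other local configuration to $1$; this is admissible, as $\delta$ outputs $\odot$ only when its center is $\odot$. The claim to verify is that one step maps a contiguous run of $k$ cells in state $1$, surrounded by quiescent cells, to a contiguous run of exactly $2k+3$ such cells: for a run at indices $[0, k-1]$, under $\alpha$ state $1$ is assigned to every even index in $[-2, 2k]$ and every odd index in $[-1, 2k-1]$ --- hence to every index in $[-2, 2k]$, a block of $2k+3$ cells --- while every remaining even index becomes $q$ and every remaining odd index becomes $\odot$; then $\Phi$ deletes those $\odot$ cells, leaving a contiguous run of $2k+3$ ones, again flanked by quiescent cells. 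Starting from $k = a(0) = n$ and iterating $k \mapsto 2k+3$ gives precisely $a(t)$ active cells after $t$ steps.

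The single genuinely delicate point is the recurrence step of the upper bound: the \enquote{inherited} cells (even indices, reading a radius-one neighborhood) and the \enquote{inserted} cells (odd indices, reading just two cells) must be handled separately, since treating them uniformly would yield the looser increment $2 s_{t-1} + 5$, hence the looser bound $(n+5)2^t - 5$, missing the stated constant. The base case, the closed form of the recurrence, and the bookkeeping in the sharpness construction are routine.
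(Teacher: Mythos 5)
Your proof is correct, and its core --- the per-step recurrence with increment $2x+3$, solved to $(n+3)2^t-3$ --- is exactly the paper's induction. Two differences are worth recording. First, the paper counts active cells directly: by the induction hypothesis there are at most $a(t)$ of them, and at most two quiescent plus $a(t)+1$ hidden cells can additionally activate, giving $2a(t)+3$; this tacitly assumes the active cells form a single contiguous block (scattered active cells would border more quiescent and hidden cells than that count allows). Your span-based bookkeeping sidesteps that assumption, since the interval argument $[2l-2,\,2r+2]$ is indifferent to gaps inside the active region; the price is having to treat inherited (even-index, radius-one) and inserted (odd-index, two-parent) cells separately, which is precisely the point you flag as delicate, and you handle it correctly. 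Second, the paper's proof establishes only the upper bound and leaves the sharpness claim unargued; your explicit three-state automaton realizing $k \mapsto 2k+3$ on inputs $1^n$ supplies that missing half (modulo the cosmetic point that an XCA formally also carries accept and reject states, which you can add as unreachable). Both arguments rest on the same unstated hypothesis that a hidden cell with two quiescent parents, and a quiescent cell with quiescent neighbors, remain inactive; your appeal to the footnote of Definition~\ref{def_xca} is as much justification as the paper itself offers.
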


\begin{proof}
  The claim is proven using induction on $t \in \Nz$.
The induction basis is evident since $A$ has exactly $n$ active cells in time
step $t = 0$.
  For the induction step, assume the claim holds for some $t \in \Nz$.
  Without loss of generality, it may also be assumed that the number of active
cells in $A$ is maximal (i.e., equal to $a(t)$).
  Then $A$ can have at most $2a(t) + 3$ active cells in time step $t + 1$ since
$a(t)$ many cells were already active, and a maximum of two quiescent and $a(t)
+ 1$ hidden cells may become active in the transition to the next step.
  The proof is complete by using $2a(t) + 3 = (n+3)2^{t+1} -3 = a(t+1)$.
  \qed
\end{proof}

We have postponed defining the acceptance condition for XCAs until now.
Usually, a CA possesses a distinguished cell, often cell $0$, which dictates
the automaton's accept or reject response \autocite{kutrib_language_theory}.
In the case of XCAs, however, under a reasonable complexity-theoretical
assumption (i.e., $\PTIME \neq \ttpNP$) such an acceptance condition results in
XCAs not making full use of the efficient cell growth indicated in
Lemma~\ref{lem_xca_number_cells} (see Section~\ref{sec_1XCA}).
This phenomenon does not occur if the acceptance condition is defined based on
\emph{unanimity}, that is, in order for an XCA to accept (or reject), \emph{all}
its cells must accept (or reject) simultaneously.
This acceptance condition is by no means novel \autocite{Ibarra_ACA,
sommerhalder_ACA, rosenfeld_book, kim_ACA}.
As an aside, note all (reasonable) CA time complexity classes (including, in
particular, linear- and polynomial-time) remain invariant when using this
acceptance condition instead of the standard one.

Also of note is that, for the standard acceptance condition, we insist on
\emph{unique} accept and reject states.
This serves to not only simplify some arguments in Section~\ref{sec_XCAP} but
also to show that unique states already suffice to decide problems in $\ttpNP$.
We revisit this topic in Section~\ref{sec_MAR_XCA}, where we consider XCAs with
multiple accept and reject states (and prove that the class of problems that can
be decided efficiently remains the same).

\begin{definition}[Acceptance condition, time complexity]
  \label{def_xca_acc_beh}
  Each XCA has a unique \emph{accept state} $a$ and a unique \emph{reject state}
$r$.
  An XCA $A$ halts if all active (and visible) cells are either all in state
$a$, in which case the XCA \emph{accepts}, or they are all in state $r$, in
which case it \emph{rejects}; if neither is the case, the computation continues.
  $L(A)$ denotes the set of words accepted by $A$.

  The \emph{time complexity} of an XCA (for an input $w$) is the number of
elapsed steps until it halts.
  An \emph{XCA decider} is an XCA which halts on every input.
  A language $L$ is in $\XCAP$ if there is an XCA decider $A'$ with polynomial
time complexity (in the length $|w|$ of $w$) and such that $L = L(A')$.
\end{definition}

In summary, the decision result of an XCA decider is the one indicated by the
first configuration in which its active cells are either all in the accept or
all in the reject state.
This agrees with our aforementioned notion of a \emph{unanimous} decision.

\section{Characterizing $\XCAP$}
\label{sec_XCAP}

This section covers the main result of this paper, that is, characterizing
$\XCAP$ as being equal to $\ttpNP$ (Theorem~\ref{thm_XCAP_ttpNP}).
It is subdivided into three parts:
First, we address a result from \textcite{BA} which is relevant towards proving
the aforementioned characterization.
Next, we state and prove Theorem~\ref{thm_XCAP_ttpNP}.
Finally, we discuss an alternative characterization of $\XCAP$ based on NTMs.

\subsection{An XCA for $\TAUT$}
\label{sec_XCAP_BA}

In this section, we cover the following result from \textcite{BA}, which
provides the starting point towards proving Theorem~\ref{thm_XCAP_ttpNP}:

\begin{proposition}
  $\NP \cup \coNP \subseteq \XCAP$.
  \label{prop_NP_coNP_in_XCAP}
\end{proposition}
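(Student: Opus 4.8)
The plan is to show $\NP \subseteq \XCAP$ by exhibiting an XCA that decides $\TAUT$ (which, by Theorem~\ref{thm_sat_np_complete}, is $\coNP$-complete; the $\NP$ case then follows by swapping the accept and reject states, and closure under polynomial-time many-one reductions handles arbitrary languages in $\NP \cup \coNP$). The key observation is Lemma~\ref{lem_xca_number_cells}: after $t$ steps an XCA can have $(n+3)2^t - 3$ active cells, so in a linear number of steps it can spawn exponentially many cells — enough to hold, in parallel, all $2^m$ interpretations of a formula with $m$ variables. The architecture I would use has three phases. \emph{Phase 1 (enumeration):} starting from the input formula $f$ on the visible cells, repeatedly use the hidden cells to duplicate the tape contents while a designated region counts down the variables $x_1,\dots,x_m$; each duplication step assigns the next variable the value $F$ in the left copy and $T$ in the right copy, so after $m$ rounds there are $2^m$ contiguous copies of $f$, the $i$-th copy having all its variables replaced by the truth values of the $i$-th interpretation $I_i$. \emph{Phase 2 (evaluation):} each copy now contains a variable-free Boolean formula; evaluate all of them in parallel. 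Since evaluating a Boolean formula is in $\PTIME$ and CAs simulate TMs in real time, each copy can compute its truth value $E_{I_i}(f) \in \{F,T\}$ in time polynomial in $|f|$, using only cells within its own block (no further expansion needed). \emph{Phase 3 (global decision):} the XCA accepts iff \emph{every} copy evaluated to $T$. Here the unanimous acceptance condition of Definition~\ref{def_xca_acc_beh} does the work: once every block has finished evaluating, a linear-time sweep (or a standard CA broadcast/aggregation, e.g. an OR-wave over the "found a falsifying interpretation" signal) drives every active cell into the accept state if all blocks say $T$, and into the reject state otherwise; a firing-squad-style synchronization ensures all cells switch to the unanimous state in the same step.

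A few details need care. First, the bookkeeping of Phase 1: the newly created hidden cells must know, from the local configuration of their parents, whether they are being born "inside" a copy of $f$ (and thus should replicate the appropriate symbol) or at a copy boundary (and thus should become a delimiter and carry the bit that distinguishes the $F$-branch from the $T$-branch for the current variable). This is a finite amount of information and can be threaded through with a constant number of auxiliary track symbols and marker signals moving at speed $1$; I would organize the duplication so that one "generation" of doubling takes $O(\ell)$ steps where $\ell$ is the current block length, giving $O(2^m \cdot |f|)$ total steps for Phase 1 — polynomial in $|f|$ since $m \le |f|$. Second, I must make sure hidden cells never illegitimately become visible: by Definition~\ref{def_xca}, $\delta(\ell)=\odot$ is forced whenever $\ell(0)=\odot$ is not intended to expand, so the transition function must explicitly promote exactly the hidden cells participating in the current doubling round and leave all others hidden — this is just a matter of gating expansion on an explicit "expand now" signal present on both parents. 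Third, synchronization across blocks of unequal... — actually the blocks all have equal length by construction, which simplifies Phase 3.

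The main obstacle I anticipate is Phase 1: orchestrating the repeated doubling so that (a) each doubling round is triggered globally in lockstep, (b) the two daughter blocks correctly inherit the already-fixed partial assignment while the newly split variable gets $F$ on the left and $T$ on the right, and (c) the process terminates after exactly $m$ rounds, which requires reading off $m$ from the input (the variables are binary strings prefixed by $x$, so $m$ is the number of distinct variable names, bounded by $|f|$). Coordinating a correct count of rounds with the exponential growth, using only a local transition rule and the limited "tertium quid" semantics of hidden cells, is the technically delicate part; everything after the $2^m$ copies exist is a routine parallel CA computation plus a broadcast, and the exponential cell budget from Lemma~\ref{lem_xca_number_cells} is exactly what makes the whole thing fit in polynomial time. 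I would also remark that this gives only $\NP \cup \coNP \subseteq \XCAP$, not equality; the reverse-type refinements (handling nested $\land$/$\lor$ combinations of $\SAT$/$\TAUT$ predicates, i.e. all of $\ttpNP$) are deferred to Theorem~\ref{thm_XCAP_ttpNP}.
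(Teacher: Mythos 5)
Your overall architecture (reduce to $\TAUT$ via $\coNP$-completeness and accept/reject symmetry, enumerate all $2^m$ interpretations by repeated block doubling, evaluate in parallel, then decide) is the same as the paper's. However, your Phase~3 contains a genuine gap. After Phase~1 the configuration consists of $2^m$ blocks, i.e.\ it has length exponential in $|f|$. Any ``linear-time sweep,'' ``OR-wave over a falsifying-interpretation signal,'' or firing-squad synchronization \emph{across the whole configuration} therefore takes time linear in the configuration length, which is exponential in $|f|$ --- this would destroy the polynomial time bound. The paper avoids all cross-block communication: each block synchronizes only \emph{internally} (length $O(|f|)$), and the blocks are already mutually synchronized by construction (each doubling round resynchronizes the two daughter blocks, and transitivity propagates this globally). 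Then every cell of a block enters the accept state if its block evaluated to $T$ and the reject state otherwise, and the local rule is fixed so that a rejecting cell stays rejecting while an accepting cell \emph{turns into} a rejecting one on the next step. Under the unanimity condition this means: if all blocks show $T$, every active cell is simultaneously accepting and the XCA halts accepting; if even one block shows $F$, the automaton does not halt at that step and one step later \emph{all} active cells are rejecting, so it halts rejecting. No aggregation wave is needed --- this is exactly the ``trick'' enabled by the unanimous acceptance condition, and it is the missing idea in your proposal.

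A secondary error: you claim Phase~1 takes $O(2^m \cdot |f|)$ steps and call this ``polynomial in $|f|$ since $m \le |f|$.'' It is not --- $2^{|f|}\cdot|f|$ is exponential. The correct accounting is that each doubling round costs $O(|f|)$ steps (all blocks double \emph{in parallel}, and each block keeps length $O(|f|)$ because a variable $v$ is replaced by $F^{|v|}$ or $T^{|v|}$), so $m$ rounds cost $O(m\cdot|f|) = O(|f|^2)$. Your per-round mechanism is consistent with this, so the slip is in the arithmetic rather than the construction, but as written the complexity claim is false.
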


Since many-one reductions by TMs can be simulated by (X)CAs in real-time, it
suffices to show $\XCAP$ contains $\NP$- and $\coNP$-complete problems.
We construct XCAs for $\SAT$ and $\TAUT$ which run in polynomial time and apply
Theorem~\ref{thm_sat_np_complete}.
Since acceptance and rejection are defined symmetrically (as per
Definition~\ref{def_xca_acc_beh}), if $L$ can be accepted by an XCA $A$, then
swapping the accept and reject states of $A$ we obtain an XCA that decides the
complement of $L$ with the exact same complexity as $A$.
Hence, as $\NP$ and $\coNP$ are complementary classes, it suffices to show
$\coNP \subseteq \XCAP$.

The key idea towards the result is that an XCA can efficiently duplicate
portions of its configuration:
Let a \emph{block} denote a subconfiguration $\# w \#$ where $\#$ is a
(special) separator symbol and $w$ is a word not containing $\#$.
In particular, starting from any such block, the XCA can, in a single step,
produce the block $\# w_2 \$ \#$ where $\$$ is a separator symbol different from
$\#$ and
\[
  w_2 = (w(0),0) (w(0),1) \cdots (w(|w| - 1),0) (w(|w| - 1),1)
\]
duplicates the word $w$ (i.e., we have $|w_2| = 2|w|$).
Using a stable sorting algorithm (following, e.g., techniques from
\textcite{CA_linear_sorting}), the XCA then sorts the symbols into place
according to the second component of the tuples above and obtains the
subconfiguration $\# w \$ w \#$ in a linear number of steps.

\begin{proof}
In accordance with the previous discussion, we construct an XCA $A$ for $\TAUT$.

Firstly, $A$ verifies the input $f$ is a syntactically correct formula; this can
be done, for instance, simply by simulating a TM for this task.
Following that, the operation of $A$ can be subdivided into two large steps.
In the first of them, $A$ iteratively expands its configuration (in a way we
shall describe in more detail) so as to cover every possible truth assignment
of the variables of $f$ and arrives at a configuration $c_f$ (detailed below).
The second step starts from $c_f$, computes the evaluations of $f$ under the
respective truth assignments in parallel, and accepts or rejects according to
whether the results are all \enquote{true} or not.
Both procedures require time polynomial in the length $|f|$ of $f$; thus, $A$
runs in polynomial time.

\paragraph{Step 1}

Given a Boolean formula $f$ over $m$ variables, let $x_0, \dots, x_{m-1}$ be the
ordering of its variables according to their first appearance in $f$ when this
is read from left to right.
Furthermore, letting $s_0, \dots, s_{2^m - 1}$ be the lexicographic ordering of
the strings in $\{ F, T \}^m$ (under the convention that $F$ precedes $T$), we
obtain a natural ordering $I_0, \dots, I_{2^m - 1}$ of the $2^m$ possible
interpretations of the $x_i$ by identifying each $s_j$ with the interpretation
$I_j$ that satisfies $I_j(x_i) = s_j(i)$.
We let $c_f$ be the following configuration:
\[
    \cdots \; q \; \# \; (E_0)^{|f|} \; \# \; \cdots
      \; \# \; (E_{2^m - 1})^{|f|} \; \# \; q \; \cdots
\]
where $E_j = E_{I_j}(f)$ is the evaluation of $f$ under $I_j$ (as a symbol,
i.e., an element of the alphabet $\{ F, T \}$).

We now further specify how $c_f$ is reached.
$A$ starts by surrounding $f$ with $\#$ delimiters.
Each block $\# f \#$ of $A$ repeats the following procedure as long as $f$
contains at least one variable:
\begin{enumerate}
  \item Duplicate $f$ (as described previously), yielding the subconfiguration
$\# f \$ f' \#$.
  \item Determine the first variable $v = xy$ in $f$, where $y \in \{ 0,1
\}^+$, and replace every occurrence of $v$ in $f$ (resp., $f'$) with $F^k$
(resp., $T^k$), where $k = |v| = 1 + |y|$.
  \item Replace the middle delimiter $\$$ with $\#$ and synchronize the two
blocks corresponding to $f$ and $f'$ (so they continue their operation at the
same time).
\end{enumerate}
When $f$ no longer contains a variable, the block evaluates it directly
(e.g., by simulating a TM for this task).

The correctness of the above is shown by induction on $m$.
The case $m = 0$ is trivial, so assume $m > 0$.
The above procedure replaces the variables of $f$ such that precisely $2^m$
copies are produced, each corresponding to an $I_j$ (and according to the
ordering described above).
Also note the blocks of $A$ always have the same length and, because of step
$3$ above and using transitivity, any two blocks are synchronized with each
other.
Thus, when $f$ has no variables left, the evaluations all happen and
terminate at the same time, thus producing the desired configuration.
Finally, it is straightforward to show the above procedure requires polynomial
time.

\paragraph{Step 2}

We now describe the second procedure of $A$ which, starting from the
configuration $c_f$ of, leads $A$ to accept or reject depending on the results
present in $c_f$.

Notice that, in the first step above, we ensured that $c_f$ is reached in such
a way that the blocks corresponding to the $E_i$ are all synchronized.
Hence, from this point each block (including the delimiting $\#$ cells)
initiates a synchronization, following which all cells in the block
simultaneously enter the accept (resp., reject) state if the respective
evaluation's result was $T$ (resp., $F$).
The reject state is maintained while accept states yield to a reject state,
that is, we have $\delta(q_1, r, q_2) = r$ and $\delta(q_1, a, q_2) = r$ for the
local transition function $\delta$ of $A$ and arbitrary states $q_1$ and $q_2$.
Thus, if all evaluations are \enquote{true} (i.e., their result is $T$), the
cells all simultaneously enter the accept state; otherwise, all cells
necessarily enter the reject state.
Since this process also takes only polynomial time, the claim follows.
\qed  
\end{proof}

We conclude this section by stressing that step 2 above builds on a
\enquote{trick} that is only possible due to the unanimous acceptance
condition of XCAs.
Assume, for the moment, that the XCA $A$ of before can continue computing (i.e.,
does not halt) even if it has reached a configuration in which it accepts or
rejects.
Then $A$ is guaranteed to eventually reach a configuration in which it rejects
regardless of what the results for the evaluations of $f$ are.
This means the only case in which $A$ is prevented from rejecting is when it
accepts (namely when all of the $E_i$ are \enquote{true}); that is, $A$ is
capable of \emph{rejecting under the condition it has not accepted}.
This kind of behavior is quite different from, say, an NTM (seen as an
alternating Turing machine with only existential states), where the result of
each computation branch is completely independent of the other branches.
We shall come back to this point later in Section~\ref{sec_syncNTM} and address
it from another perspective.

\subsection{A First Characterization}
\label{sec_XCAP_ttpNP}

In this section, we prove the main result of this paper:

\begin{theorem}
  \label{thm_XCAP_ttpNP}
  $\XCAP = \ttpNP$.
\end{theorem}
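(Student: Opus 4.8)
The strategy is to prove the two inclusions separately, using the $\ttpNP$-complete problem $\SATTAUT$ from Theorem~\ref{thm_tautsat_ttpNP} as the pivot. For $\ttpNP \subseteq \XCAP$, since many-one reductions by deterministic TMs are simulable by XCAs in real time, it suffices to exhibit a polynomial-time XCA decider for $\SATTAUT$. For $\XCAP \subseteq \ttpNP$, I would show that the language of accepting computations of a polynomial-time XCA can be recognized by a polynomial-time truth-table reduction to $\NP$; equivalently, I would many-one reduce $L(A)$ for an arbitrary polynomial-time XCA decider $A$ to $\SATTAUT$.

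**The inclusion $\ttpNP \subseteq \XCAP$.** Given an instance of $\SATTAUT$, which is a formula built from predicates $f_i \inalt_{\SAT}$ and $g_j \inalt_{\TAUT}$ combined by the restricted $\land$/$\lor$ structure of Definition~\ref{def_sattaut}, the XCA should first lay the subformulas $f_i, g_j$ out into separate blocks $\# f_i \#$, $\# g_j \#$ (a deterministic parsing/copying step, done by TM simulation in polynomial time). Then each block independently runs \emph{the very XCA constructed in Proposition~\ref{prop_NP_coNP_in_XCAP}}: the $\SAT$-blocks run the XCA for $\SAT$ and the $\TAUT$-blocks run the XCA for $\TAUT$, each expanding to cover all truth assignments and then entering the accept or reject state unanimously within its block, all synchronized (the synchronization across blocks is arranged exactly as in Step~1 of that proof, keeping all blocks the same length and in lockstep). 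At this point the configuration encodes, block by block, the truth value of each atomic predicate, and the remaining task is to evaluate the outer $\SATTAUT$ formula on these truth values; this is a polynomial-time deterministic computation on a now-static configuration, which an ordinary CA (hence an XCA) can carry out, after which all cells unanimously accept or reject. The one subtlety is that the sub-XCA for $\SAT$ uses the unanimous-rejection "trick" of Step~2 to reject when not all assignments satisfy — so when nesting it inside a larger computation I must replace that final unanimous commitment with writing the Boolean result into a designated cell of the block and then re-synchronizing, rather than actually halting; this is routine since the block already knows when its internal computation is done.

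**The inclusion $\XCAP \subseteq \ttpNP$.** Fix a polynomial-time XCA decider $A$ with time bound $p(n)$. By Lemma~\ref{lem_xca_number_cells}, after $p(n)$ steps $A$ has at most $(n+3)2^{p(n)} - 3$ active cells, so a configuration cannot be written down explicitly — this is the main obstacle, and it is precisely what forces a truth-table (rather than many-one to $\NP$) reduction. The key observation is that, although the configuration is exponentially wide, it is produced by a CA whose spatial structure is highly regular: the new cells inserted at each step are generated locally, so the state of the cell at (exponentially large) position $i$ after $t$ steps is determined by a polynomial-size amount of data — essentially the "address" of $i$ in a binary tree of depth $t$ describing which hidden-cell insertions led to it, together with the original input. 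Concretely, I would show there is a polynomial-time predicate $R(x, i, t, q)$, checkable in $\NP$ (indeed one guesses a polynomial-length "lineage" / path of bounded-arity insertions and the states along the relevant light cone, then verifies local consistency), asserting "on input $x$, cell $i$ is in state $q$ at time $t$." The acceptance condition "all active cells are in state $a$ at time $t$" is then a $\coNP$ (equivalently $\TAUT$-type) statement "for all $i$, the cell at position $i$ at time $t$ is $a$ or quiescent," and "$A$ halts-accepting at time $t$" conjoins this with "$A$ has not already halted before $t$," the latter being a conjunction over $t' < t$ of "not all cells are $a$ and not all cells are $r$ at time $t'$," each disjunct being an $\NP$ statement. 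Finally $x \in L(A)$ iff the first halting time yields acceptance, which is a polynomial-size Boolean combination (over $t \le p(|x|)$) of $\SAT$- and $\TAUT$-predicates — exactly an instance of $\SATTAUT$. Bundling this construction into a single polynomial-time many-one reduction $x \mapsto \Psi_x \in \BOOL^{\land\lor}_{\SAT,\TAUT}$ and invoking Theorem~\ref{thm_tautsat_ttpNP} gives $L(A) \in \ttpNP$.

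**Expected main difficulty.** The crux is the "succinct configuration" lemma: proving that despite exponential width, membership facts about $A$'s computation at a polynomially-bounded time step are captured by polynomial-size $\SAT$/$\TAUT$ predicates. This requires a careful encoding of cell positions via their insertion history (a bounded-branching tree of depth $\le p(n)$), a verification that local transition consistency along the relevant dependency cone can be checked by a guessed polynomial-length certificate, and a careful handling of the map $\Phi$ that deletes hidden cells and renumbers — one must argue that the "visible index" of a cell, though astronomically large, need never be computed in full, only compared and incremented symbolically. Once this lemma is in place, assembling the truth-table reduction and matching it to the $\SATTAUT$ grammar is bookkeeping.
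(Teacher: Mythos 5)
Your decomposition into the two inclusions via $\SATTAUT$ matches the paper, and your argument for $\XCAP \subseteq \ttpNP$ is essentially the paper's. The paper's key lemma (Lemma~\ref{lem_stateall}) is that the predicate $\STATEALL(A)$ --- ``on input $w$, after $\tau$ steps all active cells are in state $z$'' --- lies in $\coNP$, proven by an NTM that non-deterministically tracks a shrinking window of width $1+2(\tau-i)$ starting from the initial configuration; this neatly sidesteps the cell-addressing and $\Phi$-renumbering issues you flag as the ``crux,'' since no cell is ever named by an (exponentially large) index. The resulting recursive chain $f_i(w) = \lor(\,(w,i,a)\inalt_{\STATEALL(A)},\ \land(\lnot((w,i,r)\inalt_{\STATEALL(A)}), f_{i+1}(w))\,)$ is exactly your ``first halting time'' combination, and it does fit the restrictive grammar of $\SATTAUT$. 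That direction of your proposal is sound in outline, if more delicate to execute than the paper's version.

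The genuine gap is in $\ttpNP \subseteq \XCAP$, at precisely the step you call routine: ``writing the Boolean result into a designated cell of the block and then re-synchronizing.'' After the sub-XCA for $\SAT$ (or $\TAUT$) has expanded, the block consists of $2^m$ sub-blocks, one per assignment, spread over exponentially many cells, and the truth value of $f_i \inalt_\SAT$ is a function of all of their outcomes. A designated cell can, in $t$ steps, be influenced only by cells at most $t$ positions away, so localizing this bit in any single cell (or any polynomially bounded region) takes exponential time; this is exactly the obstruction behind Theorem~\ref{thm_1XCAP_P}, and if your step worked it would let a 1XCA decide $\SAT$ in polynomial time, i.e., $\NP \subseteq \PTIME$. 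The paper's Proposition~\ref{prop_ttpNP_in_XCAP} never localizes the value of a predicate: the branches whose assignment says ``continue'' go on to evaluate the next connective of the outer formula, while all other branches enter a state that alternates between accept and reject every step (with accept states confined to even-numbered steps and reject states to odd-numbered ones), so that they never unilaterally force a decision and the unanimous acceptance condition itself performs the exponential-width aggregation. Your plan needs this (or an equivalent) mechanism; a deterministic post-processing pass over a ``static configuration of truth values'' cannot exist.
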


The equality in Theorem~\ref{thm_XCAP_ttpNP} is proven by considering the two
inclusions (Propositions~\ref{prop_ttpNP_in_XCAP} and \ref{prop_XCAP_in_ttpNP}).

\begin{proposition}
  $\ttpNP \subseteq \XCAP$.
  \label{prop_ttpNP_in_XCAP}
\end{proposition}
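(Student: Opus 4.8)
The plan is to reduce from the $\ttpNP$-complete language $\SATTAUT$ (Theorem~\ref{thm_tautsat_ttpNP}) to an XCA computation: given an input formula $F \in \BOOL^{\land\lor}_{\SAT,\TAUT}$, we build an XCA $A$ that decides whether $F \in \SATTAUT$ in polynomial time. Because many-one reductions by TMs can be simulated by XCAs in real-time (as noted before Proposition~\ref{prop_NP_coNP_in_XCAP}), and $\SATTAUT$ is $\ttpNP$-complete, exhibiting such an $A$ for $\SATTAUT$ suffices.

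First I would have $A$ verify, by simulating a TM, that the input is a syntactically correct member of $\BOOL^{\land\lor}_{\SAT,\TAUT}$; otherwise it rejects. Then $A$ parses $F$ into its constituent leaf predicates $g_1 \inalt_{L_1}, \dots, g_\ell \inalt_{L_\ell}$ (each $L_i \in \{\SAT, \TAUT\}$) together with the $\land/\lor$ structure connecting them — note that by Definition~\ref{def_sattaut} this structure is a simple ``list-like'' nesting, so $F$ is true iff a certain conjunction/disjunction over the truth values of the $g_i \in L_i$ holds. The core subtask is to evaluate, for each $i$ independently and in parallel, whether $g_i \in L_i$. Here I reuse the machinery from the proof of Proposition~\ref{prop_NP_coNP_in_XCAP}: an XCA can efficiently duplicate a subconfiguration, and a block holding a formula can expand over all $2^{m_i}$ truth assignments of its $m_i$ variables and compute all evaluations $E_j(g_i)$ in parallel, in time polynomial in $|g_i|$. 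The one twist is that after Step~1 we have, in each $\SAT$-block, the bits indicating for each assignment whether it satisfies $g_i$, and we must aggregate them: for $L_i = \SAT$ we need an OR over the $2^{m_i}$ results (``at least one true''), and for $L_i = \TAUT$ an AND (``all true''). This aggregation is a straightforward CA computation — run a synchronization, then sweep a signal across the block combining the per-assignment bits — and it produces, in each block, a single ``verdict'' symbol $b_i \in \{T, F\}$ for the predicate $g_i \inalt_{L_i}$.

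Next, $A$ must combine the verdicts $b_1, \dots, b_\ell$ according to the $\land/\lor$ skeleton of $F$ to obtain the overall truth value of $F$. Since all blocks were kept synchronized throughout (exactly as in Step~1 of the earlier proof), the $b_i$ all become available simultaneously; a single sweep across the whole configuration, respecting the recursive structure recorded at the delimiters, computes the final bit. Finally, $A$ uses the unanimous acceptance trick of Step~2 in the proof of Proposition~\ref{prop_NP_coNP_in_XCAP}: it broadcasts the final bit so that, if it is $T$, every active cell enters the accept state simultaneously, and otherwise every active cell enters the reject state, with rejects dominating any stray accepts via $\delta(q_1, r, q_2) = r$. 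All four phases — syntax check, parallel predicate evaluation, aggregation, and final combination — run in time polynomial in $|F|$, so $A$ is a polynomial-time XCA decider with $L(A) = \SATTAUT$, giving $\SATTAUT \in \XCAP$ and hence $\ttpNP \subseteq \XCAP$.

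The main obstacle I anticipate is bookkeeping rather than any conceptual leap: the XCA must simultaneously maintain many independent blocks (one per leaf predicate, each undergoing its own exponential expansion over a possibly different variable set), keep them all globally synchronized so that the aggregation and the final $\land/\lor$ combination can be performed in lockstep, and correctly route the recursive formula structure of $F$ down to the leaves and the verdicts back up — all while the number of cells grows exponentially (Lemma~\ref{lem_xca_number_cells}) but the time stays polynomial. Making the synchronization uniform across blocks of differing sizes, and verifying that the aggregation sweep and the structural combination sweep each cost only polynomially many steps, is the part that needs care; the individual ingredients (duplication, sorting-based rearrangement, Boolean evaluation, firing-squad synchronization) are all standard or already established in the proof of Proposition~\ref{prop_NP_coNP_in_XCAP}.
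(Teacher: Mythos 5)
There is a genuine gap, and it sits exactly where you say ``care is needed'': the aggregation of the per-assignment results into a single verdict symbol $b_i$ for each predicate. After the expansion phase, the block evaluating a predicate $g_i \inalt_{L_i}$ consists of $2^{m_i}$ sub-blocks, i.e., exponentially many cells. In an XCA cells are only ever inserted, never deleted, so distances between existing cells never decrease and information still propagates at most one cell per step; consequently, collecting an OR (or AND) of bits held by $2^{m_i}$ distinct cells into any single cell requires $\Omega(2^{m_i})$ steps. The same objection defeats your ``structural combination sweep'' over the $\land/\lor$ skeleton and the final ``broadcast'' of the overall truth value. The only mechanism an XCA has for exponential-fan-in aggregation in polynomial time is the unanimity acceptance condition itself---a one-shot, global test of ``every active cell accepts'' (resp.\ rejects)---which is precisely the point stressed at the end of Section~\ref{sec_XCAP_BA}. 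A plan that first localizes each predicate's answer and then combines the answers by signal propagation cannot be repaired within polynomial time.

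The paper's proof avoids this by never collapsing a predicate's exponentially many branches into a localized bit. It exploits the list-like nesting of $\SATTAUT$ instances, $\land(f_1 \inalt_\SAT, \lor(f_2 \inalt_\TAUT, \dots))$, and evaluates the predicates \emph{sequentially}: every branch carries a copy of the entire remaining formula; a branch whose assignment falsifies $f_1$ ``parks'' by entering the reject state and thereafter alternating between reject and accept (with accept states confined to even-numbered steps and reject states to odd-numbered ones), while a satisfying branch proceeds to the next connective. If no branch satisfies $f_1$, all branches enter the reject state simultaneously (they are synchronized) and the automaton rejects; otherwise the parked branches are timed so as to agree with whatever unanimous decision the surviving branches later reach. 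Thus each exponential fan-in is absorbed by the halting condition, one query at a time, rather than by any sweep. To fix your argument you would have to abandon the parallel verdict-and-sweep architecture in favor of something of this sequential, acceptance-condition-driven kind.
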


\begin{proof}
  The claim is shown by constructing an XCA $A$ that decides $\SATTAUT$ (see
Definition~\ref{def_sattaut} and Theorem~\ref{thm_tautsat_ttpNP}) in polynomial
time.
  The actual inclusion follows from the fact that CAs can simulate
polynomial-time many-one reductions by TMs in real-time.

  Given a problem instance $f$, $A$ evaluates $f$ recursively.
  Without loss of generality, we may assume $f = \land(f_1 \inalt_\SAT,
\lor(f_2 \inalt_\TAUT, f'))$, where $f'$ is a further problem instance; other
instances of $\SATTAUT$ are obtained by replacing $f_1$, $f_2$, or $f'$ with a
trivial formula (e.g., a trivial tautology).

  To evaluate $f_1 \inalt_\SAT$, $A$ emulates the behavior of the XCA for
$\SAT$ (see Proposition~\ref{prop_NP_coNP_in_XCAP}); however, special care must
be taken to ensure $A$ does not halt prematurely.
  All computation branches retain a copy of $f$.
  Whenever a branch obtains a \enquote{true} result, the respective cells do not
directly accept (as in the original construction); instead, they proceed with
evaluating the formula's next connective.
  Conversely, if the result is false, the respective cells simply enter the
reject state.
  The behavior for $f_2 \inalt_\TAUT$ is analogous, with $A$ emulating the XCA
for $\TAUT$ instead (and with exchanged accept and reject states, accordingly).
  Additionally, we require $\delta(q_1, r, q_2) = a$ and $\delta(q_1, a, q_2) =
r$ for every states $q_1$ and $q_2$, that is, once a cell enters the accept or
reject state, it is forced to unconditionally alternate between the two.
  To ensure $A$ is still able to accept or reject, we (arbitrarily) enforce
accept states only exist in even-numbered steps and reject states only in
odd-numbered ones.\footnote{%
  For example, have each cell contain a bit counter and, if needed, wait for one
step before transitioning to an accept or reject state.
}

  If $f_1 \not\in \SAT$, all branches of $A$ transition into the reject state,
and $A$ rejects.
  Otherwise, $f_1$ is satisfiable; thus, at least one branch obtains a
\enquote{true} result, and $A$ continues to evaluate $f$ until the
(aforementioned) base case is reached.
  An analogous argument applies for $f_2$.
  Note the synchronicity of the branches guarantee they operate exactly the same
and terminate at the same time.
  The repeated transition between accept and reject states guarantee the only
cells relevant for the final decision of $A$ are those in the branches which are
still \enquote{active} (in the sense they are still evaluating $f$).

  In conclusion, $A$ accepts $f$ if and only if it evaluates to true; otherwise,
$A$ rejects $f$.
  $A$ runs in polynomial time since $f$ has at most $|f|$ predicates and since
evaluating a predicate requires polynomial time in $|f|$.
\qed
\end{proof}

For the converse, we express an XCA computation as a $\SATTAUT$ instance.
The main effort here lies in defining the appropriate \enquote{variables}:

\begin{definition}[$\STATEALL$]
  Let $A$ be an XCA, and let $V_A$ be the set of triples $(w, t, z)$, $w$ being
an input for $A$, $t \in \{0, 1\}^+$ a (standard) binary representation of $\tau
\in \Nz$, and $z$ a state of $A$.
  $\STATEALL(A) \subseteq V_A$ is the subset of triples such that, if $A$ is
given $w$ as input, then after $\tau$ steps all active cells are in state $z$.
  \label{def_stateall}
\end{definition}

\begin{lemma}
  For any XCA $A$ with polynomial time complexity, $\STATEALL(A) \in \coNP$.
  \label{lem_stateall}
\end{lemma}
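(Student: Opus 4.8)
The plan is to show that the complement of $\STATEALL(A)$ is in $\NP$; that is, given a triple $(w,t,z)$, a polynomial-time NTM can guess a certificate witnessing that \emph{not} all active cells are in state $z$ after $\tau$ steps (where $\tau$ is the number encoded by $t$). First I would observe that, since $A$ has polynomial time complexity, there is a polynomial $p$ with the property that on input $w$ of length $n$ the machine halts within $p(n)$ steps; moreover $\tau \le p(n)$ is necessary for the triple to be interesting, and if $t$ encodes a $\tau > p(n)$ the instance can be rejected outright (the configuration after halting is, say, all-accept or all-reject, which is handled as a boundary case, or simply the triple is declared not in $\STATEALL(A)$ — one must be slightly careful here about what "after $\tau$ steps" means once $A$ has halted, but this is a routine convention). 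The crucial quantitative fact is Lemma~\ref{lem_xca_number_cells}: after $\tau \le p(n)$ steps there are at most $(n+3)2^{\tau} - 3 \le (n+3)2^{p(n)}$ active cells, so a single cell index has a polynomial-length binary representation, which is exactly what makes a short certificate possible.

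Next I would describe the certificate and the verification. The certificate for "$(w,t,z) \notin \STATEALL(A)$" is an index $i$ of an active cell together with enough local history to determine that cell's state at time $\tau$. The key point is \emph{locality}: the state of a visible cell at time $\tau$ depends only on the states at time $\tau-1$ of a bounded neighborhood of its "position" in the pre-contraction configuration $\alpha(c_{\tau-1})$, and — because of the $\Phi$-contraction — tracking which original cells survive. So the certificate should be a "light cone": a polynomial-size array listing, for each time step $s = 0,1,\dots,\tau$, the states of the $O(\tau - s)$-wide window of cells around the ancestor of cell $i$, including the hidden cells, together with the local bookkeeping needed to know how $\Phi$ reindexes things. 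The NTM guesses this entire array (its size is $O(\tau^2) = O(p(n)^2)$, polynomial) and then deterministically checks in polynomial time that (a) the row at $s=0$ is consistent with the initial configuration determined by $w$ (surrounded by quiescent cells), (b) each subsequent row follows from the previous one by one application of $\delta$, $\alpha$, and the contraction $\Phi$ restricted to that window, and (c) the state of cell $i$ in the final row is an \emph{active} state different from $z$ — or is quiescent while $z$ is active, or vice versa — in any case witnessing that not all active cells are in state $z$. If all checks pass, accept; this shows $\overline{\STATEALL(A)} \in \NP$, hence $\STATEALL(A) \in \coNP$.

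I expect the main obstacle to be the careful handling of the $\Phi$-contraction and the resulting cell reindexing, which is where the XCA genuinely differs from an ordinary CA. In a standard CA the ancestor window of width $2(\tau - s)+1$ around a fixed index is trivially determined, but here each step roughly doubles the cell count and then deletes the $\odot$-cells and shifts everything toward $0$, so the "position" of cell $i$ at time $\tau$ corresponds to a position in a much larger array at time $\tau-1$, and one must argue that the information needed to reconstruct this — essentially, how many non-$\odot$ cells lie between the ancestor window and the origin, or at least the \emph{relative} offsets — can be folded into a polynomial-size certificate. One clean way around tracking absolute indices globally is to have the certificate record, at each level, only the contiguous window of cells (visible and hidden) that are ancestors of the window needed at the next level, together with, at each level, the parity/offset data produced locally by $\alpha$, and to verify the contraction only \emph{locally within the recorded window} while separately certifying the single global quantity $m_+(i)$ or $m_-(i)$ (the number of surviving cells between $0$ and position $i$) — but this global count could be exponential, so it cannot be written out explicitly; instead one guesses it in binary (polynomially many bits, again by Lemma~\ref{lem_xca_number_cells}) and checks consistency incrementally. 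Pinning down exactly which global data must be guessed versus which can be reconstructed locally, and showing the whole bundle stays polynomial, is the delicate part; everything else is a routine simulation argument.
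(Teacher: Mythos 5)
Your core idea---certifying the state of a single cell at time $\tau$ by a polynomial-size \enquote{light cone} of windows of width $O(\tau - s)$, verified locally level by level---is exactly the idea behind the paper's proof; the paper merely packages it as a forward nondeterministic simulation rather than a guess-and-verify of the complement. Its NTM $T$ keeps a window $c_i$ of width $1+2(\tau-i)$, starting from $c_0 = q^{2\tau} w q^{2\tau}$, applies the XCA rule (including the hidden cells), discards the two boundary cells, and nondeterministically selects the next contiguous window; an induction then shows that every active cell at time $\tau$ is the center of some branch's final window, so \enquote{all branches accept} characterizes membership in $\STATEALL(A)$ and $\coNP$ follows directly. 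Up to this repackaging the two arguments coincide.

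The one place where your proposal goes astray is precisely the point you flag as delicate: the absolute reindexing caused by $\Phi$. You propose to certify the global offset $m_+(i)$ by guessing it in binary and \enquote{checking consistency incrementally}, but that cannot work---the number of surviving cells between the origin and position $i$ depends on the entire configuration outside your recorded windows, which may be exponentially large, so no local check of the certificate can validate that count. The resolution is that this quantity is never needed: to witness that not all active cells are in state $z$ it suffices to exhibit \emph{some} active cell at time $\tau$ with state $\neq z$ together with a consistent light cone rooted in the initial configuration, and within each window the contraction $\Phi$ is just the deletion of the $\odot$-cells, a purely local operation on relative positions. Absolute indices can be dropped from the certificate entirely; what must then be proved (and is the real content of the paper's correctness argument) is the completeness direction, namely that every active cell at time $\tau$---whether it was already active at time $\tau-1$, was hidden and just became visible, or was quiescent---admits such a rooted light cone. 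Finally, a small slip: a final-row cell that is \emph{quiescent} and $\neq z$ is not a witness, since $\STATEALL(A)$ quantifies only over active cells; your acceptance condition (c) should demand an active state different from $z$.
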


\begin{proof}
  Let $p\colon \N \to \Nz$ be a polynomial bounding the time complexity of $A$,
that is, for an input of size $n$, $A$ always terminates after at most $p(n)$
many steps.
  Suppose there is an NTM $T$ which \emph{covers} all active cells in step
$\tau$ of $A$ for input $w$, that is, for each such active cell $r$ there is at
least one computation branch of $T$ corresponding to $r$.
  Furthermore, assume $T$ can then compute the state $z'$ of $r$ in polynomial
time and accepts if and only if $z' = z$.
  Without restriction, we may assume $\tau \le p(|w|)$; this can be enforced by
$T$, for instance, by computing $p(|w|)$ and rejecting whenever $\tau > p(|w|)$.
  Then, the claim follows immediately from the existence of $T$:
  If all computation branches of $T$ accept, then in step $\tau$ all cells of
$A$ are in state $z$; otherwise, there is a cell in a state which is not $z$,
and $T$ rejects.

  The rest of the proof is concerned with the construction of a $T$ with the
properties just described.
  First, we describe the construction, followed by arguing it has the desired
complexity (which is fairly straightforward).
  The last part of the proof concerns the correctness of $T$ which, although
fairly evident, calls for a more technical argument.%
  \footnote{The main reason for this is that our construction of $T$
non-deterministically picks cells starting at the initial configuration of $A$
instead of (picking a final computation step and then) an arbitrary cell from
the final configuration.
    The issue with the latter approach is that then, in order to compute the
chosen cell's state, we would require a procedure that, given an arbitrary cell
$z$ in the final configuration of $A$ and \emph{without simulating $A$
directly}, determines whether $z$ was already present in the initial
configuration and, provided it was not, in which step exactly did it turn from a
hidden cell into an active one.
    This is indeed feasible if we constructed $A$ ourselves but virtually
impossible in case $A$ is arbitrary (which is the setting of the proof).}

  \paragraph{Construction}
  To compute the state of a (in particular, active) cell in step $\tau$, $T$
computes a series of \emph{subconfigurations} $c_0, \dots, c_\tau$ of $A$, that
is, contiguous excerpts of the global configuration of $A$.
  As the number of cells in an XCA may increase exponentially in the number
of computation steps, bounding $c_i$ is essential to ensure $T$ runs in
polynomial time; in particular, $T$ maintains $|c_i| = 1 + 2(\tau - i)$ (for
$i \ge 1$), thus ensuring the lengths of the $c_i$ are linear in $\tau$ (which,
in turn, is polynomial in $|w|$).
  This choice of length for the $c_i$ ensures each of the subconfigurations
correspond to a cell of $A$ surrounded by $\tau - i$ cells on either
side (i.e., each $c_i$ corresponds to the extended neighborhood of radius $\tau
- i$ of said cell).
  The non-determinism of $T$ is used exclusively in picking the cells from
$c_i$ which are to be included in the next subconfiguration $c_{i+1}$.

  The initial subconfiguration $c_0$ is set to be $q^{2\tau} w q^{2\tau}$,
thus containing the input word as well as (as shall be proven) a sufficiently
large number of surrounding quiescent cells.
  To obtain $c_{i+1}$ from $c_i$, $T$ applies the transition function of $A$ to
$c_i$ and obtains a new temporary subconfiguration $c_{i+1}'$.
  The next state of the two \enquote{boundary} cells (i.e, those belonging to
indices $0$ and $|c_i| - 1$) cannot be determined, and they are excluded from
$c_{i+1}'$.
  As a result, $c_{i+1}'$ contains $|c_i| - 2$ cells from the previous
configuration $c_i$, plus a maximum of $|c_i| - 1$ additional cells which were
previously hidden.
  Therefore, to maintain $|c_{i+1}| = 1 + 2(\tau - (i+1))$, $T$
non-deterministically sets $c_{i+1}$ to a contiguous subset of $c_{i+1}'$
containing exactly $1 + 2(\tau - (i+1)) \le |c_i| - 2$ cells.

  The process of selecting a next subconfiguration $c_{i+1}$ from $c_i$ is
depicted in Figure~\ref{fig_lem_stateall_subconfig_simulation}.
  In the illustration, $|c_i|$ has been replaced with $n$ for legibility.
  $T$ at first applies the global transition function of $A$ to obtain an
intermediate subconfiguration $c_{i+1}'$ with $m = |c_{i+1}'|$ cells.
  Because of hidden cells, $c_{i+1}'$ may consist of $n - 2 \le m \le 2n - 3$
cells.
  Non-determinism is used to select a contiguous subconfiguration of $n - 2$
cells, thus giving rise to $c_{i+1}$.

  \begin{figure}
    \centering
    \begin{tikzpicture}[every node/.style={block},
        block/.style={
          minimum width=5mm,
          minimum height=height("$z'_{n-1}$") + 3.5mm,
          outer sep=0pt,
          draw, rectangle, node distance=0pt}]

      \setlength{\rowspc}{1.35cm}

      \node (z1) {$z_1$};
      \node (z2) [right=of z1] {$z_2$};
      \node[minimum width=7em] (mid1) [right=of z2] {$\dots$};
      \node (zn-1) [right=of mid1] {$z_{n-1}$};
      \node (zn) [right=of zn-1] {$z_n$};

      \draw (z1.north west) -- ++(-0.5cm,0)
            (z1.south west) -- ++ (-0.5cm,0)
            (zn.north east) -- ++(0.5cm,0)
            (zn.south east) -- ++ (0.5cm,0);

      \node (ci) [left=of z1, draw=none, xshift=-2cm] {$c_i$};

      \node[minimum width=11em] (mid2) %
          [below=of mid1, yshift=-\rowspc] {$\dots$};
      \node (z'2) [left=of mid2] {$z_2'$};
      \node (z'1) [left=of z'2] {$z_1'$};
      \node (z'm-1) [right=of mid2] {$z_{m-1}'$};
      \node (z'm) [right=of z'm-1] {$z_m'$};

      \draw (z'1.north west) -- ++(-0.5cm,0)
            (z'1.south west) -- ++ (-0.5cm,0)
            (z'm.north east) -- ++(0.5cm,0)
            (z'm.south east) -- ++ (0.5cm,0);

      \node[draw=none] (ci+1') at (ci |- z'1) {$c_{i+1}'$};

      \draw[dashed] (z2.south west) -- (z'1.north west)
                    (zn-1.south east) -- (z'm.north east);

      \coordinate (boxsw) at ([xshift=.75mm, yshift=-1.5mm] z'2.south west);
      \coordinate (boxse) at ([xshift=3.5cm] boxsw);
      \coordinate (boxnw) at ([xshift=.75mm, yshift=1.5mm] z'2.north west);
      \coordinate (boxne) at ([xshift=3.5cm] boxnw);
      \draw (boxsw) -- (boxse) |- (boxnw) -- (boxsw);

      \draw[|-|] ([yshift=1.75mm] boxnw) -- ([yshift=1.75mm] boxne)
        node[midway, above, draw=none, yshift=-1mm] {\scriptsize $n - 2$};

      \node[minimum width=4em] (mid3) %
          [below=of mid2, yshift=-\rowspc] {$\dots$};
      \node (z''2) [left=of mid3] {$z_2''$};
      \node (z''1) [left=of z''2] {$z_1''$};
      \node (z''n-2) [right=of mid3] {$z_{n-3}''$};
      \node (z''n-1) [right=of z''n-2] {$z_{n-2}''$};

      \draw (z''1.north west) -- ++(-0.5cm,0)
            (z''1.south west) -- ++ (-0.5cm,0)
            (z''n-1.north east) -- ++(0.5cm,0)
            (z''n-1.south east) -- ++ (0.5cm,0);

      \node[draw=none] (ci+1) at (ci |- z''1) {$c_{i+1}$};

      \draw[dashed] (boxsw) -- (z''1.north west)
                    (boxse) -- (z''n-1.north east);
    \end{tikzpicture}
    \caption{Illustration of how $T$ obtains the next subconfiguration
$c_{i+1}$ from $c_i$.}
    \label{fig_lem_stateall_subconfig_simulation}
  \end{figure}
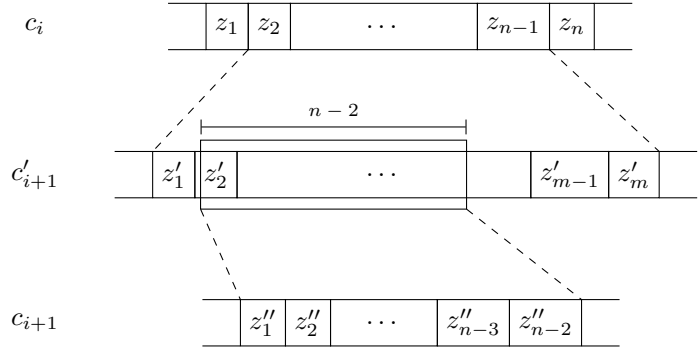

  \paragraph{Complexity}
  $T$ runs in polynomial time since the invariant $|c_i| = 1 + 2(\tau - i)$
guarantees the number of states $T$ computes in each step is bounded by a
multiple of $\tau$, which, in turn, we assumed to be bounded by $p(|w|)$.
  Only $|w|$ has to be taken into account when estimating the time complexity
of $T$ since the encoding of $z$ is $O(1)$ long, while that of $t$ has length
$O(\log p(|w|)) = O(\log |w|)$; as a result, the problem instance $(w, t, z)$
has length $O(|w|)$.

  \paragraph{Correctness}
  To show $T$ covers all active cells of $A$ in step $\tau$, it suffices to
prove the following by induction:
  Let $i \in \{ 0, \dots, \tau \}$, and let $z_1, \dots, z_m$ be the active
cells of $A$ in step $i$; then $T$ \emph{covers} all subconfigurations of
$q^{2(\tau - i)} z_1 \dotsm z_m q^{2(\tau - i)}$ of size $1 + 2(\tau - i)$,
that is, for every such subconfiguration $s$ there is a branch of $T$ in which
it picks $s$ as its $c_i$.
  Note this corresponds to $T$ covering all subconfigurations of $A$ in step
$i$ which contain at least one active cell; thus, when $T$ reaches step $\tau$,
it covers all subconfigurations of $z_1 \dotsm z_m$ of size $1$, that is, all
active cells.

  The induction basis follows from $c_0 = q^{2\tau} w q^{2\tau}$.
  For the induction step, fix a step $0 < i \le \tau$ and assume the claim
holds for all steps prior to $i$.
  To each subconfiguration of $q^{2(\tau - i)} z_1 \dotsm z_m q^{2(\tau - i)}$
having size $1 + 2(\tau - i)$ corresponds a cell $r$ which is located in its
center; thus, we may unambiguously denote every such subconfiguration by
$s_i(r)$.
  Now let $s_i(r)$ be given and consider the following three cases: $r$ was
active in step $i - 1$; $r$ was a hidden cell which became active in the
transition to step $i$; or $r$ was a quiescent cell in step $i - 1$ and, since
$|s_i(r)| = 1 + 2(\tau - i)$ and $r$ is the middle cell of $s_i(r)$, $r$ is at
most $\tau - i$ cells away from $z_1$ or $z_m$.

  In the first case, by the induction hypothesis, there is a value of
$c_{i-1}$ corresponding to $s_{i-1}(r)$; since only the two boundary cells are
present in $c_{i-1}$ but not in $c_i'$, $T$ can choose $c_i$ from $c_i'$ with
$r$ as its middle cell and obtain $s_i(r)$.
  In the second case, for any of the two parents $p_1$ and $p_2$ of $r$, there
are, by the induction hypothesis, values of $c_{i-1}$ which equal
$s_{i-1}(p_1)$ and $s_{i-1}(p_2)$; in either case, choosing $c_i$ from $c_i'$
with $r$ as its middle cell again yields $s_i(r)$.

  Finally, if $r$ was a quiescent cell, then, without loss of generality,
consider the case in which $r$ was located to the left of the active cells in
step $i - 1$.
  By the induction hypothesis, for each cell $r'$ up to $\tau - i + 1$ cells
away from the leftmost active cell $z_1$ there is a value of $c_{i-1}$
corresponding to $s_{i-1}(r')$, and the first case applies; the only exception
is if $c_i$ would then contain only quiescent cells, in which case $r$ would be
located strictly more than $\tau - i$ cells away from $z_1$, thus contradicting
our previous assumption.
  The claim follows.
\qed
\end{proof}

\begin{proposition}
  $\XCAP \subseteq \ttpNP$.
  \label{prop_XCAP_in_ttpNP}
\end{proposition}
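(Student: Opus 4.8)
The plan is to take an arbitrary XCA decider $A$ with polynomial time bound $p$ and $L(A)=L$, set $T=p(|w|)$, and construct in polynomial time (from $w$) an instance $\rho(w)$ of the $\ttpNP$-complete language $\SATTAUT$ (Theorem~\ref{thm_tautsat_ttpNP}) with $\rho(w)\in\SATTAUT \iff w\in L(A)$. This yields $L(A)\le_m^p\SATTAUT$, hence $L(A)\in\ttpNP$, and since $L$ ranges over all of $\XCAP$, the inclusion $\XCAP\subseteq\ttpNP$. The starting point is that, since $A$ is a decider with time bound $p$, its run on $w$ reaches a configuration that is all in the accept state or all in the reject state at some step $\tau^\ast\le T$, and (by Definition~\ref{def_xca_acc_beh}) $w\in L(A)$ exactly if the first such configuration is all-accept.

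First I would fix the auxiliary $\coNP$-predicates. By Lemma~\ref{lem_stateall}, $\STATEALL(A)\in\coNP$, so by $\coNP$-completeness of $\TAUT$ (Theorem~\ref{thm_sat_np_complete}) there is a polynomial-time map $g$ with $(w,t,z)\in\STATEALL(A)\iff g(w,t,z)\in\TAUT$. For $\tau\le T$, write $\mathrm{acc}(\tau)$ and $\mathrm{rej}(\tau)$ for the statements that all active cells of $A$ on input $w$ are in the accept, respectively reject, state after $\tau$ steps. Then $\mathrm{acc}(\tau)$ is exactly the $\TAUT$-predicate $g(w,\tau,a)\inalt_\TAUT$, while $\lnot\mathrm{acc}(\tau)$ and $\lnot\mathrm{rej}(\tau)$ are the $\SAT$-predicates $(\lnot g(w,\tau,a))\inalt_\SAT$ and $(\lnot g(w,\tau,r))\inalt_\SAT$, using that a formula fails to be a tautology iff its negation is satisfiable.

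Next I would observe that acceptance of $A$ already has the shape of a $\SATTAUT$-chain. Because the first halting step is $\le T$, we have $w\in L(A)$ iff $\bigvee_{\tau=0}^{T}\bigl(\mathrm{acc}(\tau)\land\bigwedge_{\tau'<\tau}(\lnot\mathrm{acc}(\tau')\land\lnot\mathrm{rej}(\tau'))\bigr)$, and a downward induction on $\tau$ shows this disjunction equals $E_0$, where $E_T:=\mathrm{acc}(T)$ and $E_\tau:=\mathrm{acc}(\tau)\lor\bigl(\lnot\mathrm{acc}(\tau)\land\lnot\mathrm{rej}(\tau)\land E_{\tau+1}\bigr)$. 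Translating $E_\tau$ verbatim into the syntax of Definition~\ref{def_sattaut} --- each $\mathrm{acc}(\tau)$ as a $\TAUT$-predicate at a $\lor$-position, each $\lnot\mathrm{acc}(\tau)$ and $\lnot\mathrm{rej}(\tau)$ as $\SAT$-predicates at consecutive $\land$-positions --- produces a formula $\rho(w):=E_0$ that lies in $\BOOL^{\land\lor}_{\SAT,\TAUT}$ by rules (1)--(3) of Definition~\ref{def_sattaut} (checked by the same downward induction). It is computable in time polynomial in $|w|$: there are $O(T)$ levels, each adding a bounded number of connectives and three applications of $g$, whose outputs have polynomial size since the encoding of $\tau$ has length $O(\log|w|)$. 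Finally, $\rho(w)\in\SATTAUT$ iff $E_0$ is true under the canonical interpretation of Definition~\ref{def_sattaut}, iff $w\in L(A)$ by the identity above.

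The step I expect to be the main obstacle is matching the rigid syntax of $\BOOL^{\land\lor}_{\SAT,\TAUT}$: such a formula is forced to be a right-nested chain in which every $\land$ carries a $\SAT$-predicate and every $\lor$ a $\TAUT$-predicate, so a naive ``disjunction over $\tau$ of conjunctions'' does not fit. The way around it --- and the conceptual content of the proof --- is that the temporal structure of an XCA run (pass through steps that are neither all-accept nor all-reject, then accept) mirrors this chain precisely: ``all cells accept'' is a $\coNP$-event, hence a $\TAUT$-query sitting in a $\lor$-position, whereas ``not all cells accept'' and ``not all cells reject'' are $\NP$-events, hence $\SAT$-queries sitting in $\land$-positions; the recursion for $E_\tau$ is exactly this alignment. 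Once that is set up, the grammar membership of $\rho(w)$ and the semantic equivalence both reduce to routine inductions, using that $A$ is a decider, so that among the disjuncts with $\tau\le\tau^\ast$ only the one for $\tau=\tau^\ast$ can be true. (If one is willing to invoke closure of $\ttpNP$ under non-adaptive polynomial-time reductions, the same construction reads more briefly as a truth-table reduction of $L(A)$ to $\STATEALL(A)\in\coNP$; packaging it as a single $\SATTAUT$-instance keeps the argument within the machinery already introduced.)
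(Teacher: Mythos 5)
Your proof is correct and follows essentially the same route as the paper's: a right-nested recursion over the time steps $0,\dots,p(|w|)$ built from the $\coNP$-predicate $\STATEALL(A)$ of Lemma~\ref{lem_stateall}, with the all-accept event as a $\TAUT$-query in $\lor$-position and the negated predicates as $\SAT$-queries in $\land$-position, packaged as a polynomial-time many-one reduction to $\SATTAUT$. The only differences are cosmetic: the paper's recursion omits your (logically redundant) $\lnot\mathrm{acc}(\tau)$ conjunct and uses the base case $\mathrm{acc}(p(n))\lor\lnot\mathrm{rej}(p(n))$ rather than $\mathrm{acc}(p(n))$ alone.
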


\begin{proof}
    Let $L \in \XCAP$, and let $A$ be an XCA for $L$ whose time complexity is
bounded by a polynomial $p\colon \N \to \Nz$.
    Additionally, let $w$ be an input for $A$, $V_A$ be as in
Definition~\ref{def_stateall}, and let $V = V_A \cdot \{ \inalt_{\STATEALL(A)}
\}$, where $\inalt_{\STATEALL(A)}$ is a syntactic symbol standing for membership
in $\STATEALL(A)$ (cf.\ Definition~\ref{def_sattaut}).
    Define $f_0(w), \dots, f_{p(n)}(w) \in \BOOL_V$ recursively by
    \begin{align*}
      f_i(w) = \lor( \, &(w, i, a) \inalt_{\STATEALL(A)}, \\
              &\land( \, \lnot( \, (w, i, r) \inalt_{\STATEALL(A)} \, ),
                        f_{i+1}(w) \, ) \, )
    \end{align*}
    for $i < p(n)$ and
    \begin{align*}
      f_{p(n)}(w) = \lor( \, &(w, p(n), a) \inalt_{\STATEALL(A)}, \\
              &\lnot( \, (w, p(n), r) \inalt_{\STATEALL(A)} \, ) \, ).
    \end{align*}

    Lemma~\ref{lem_stateall} together with the $\coNP$-completeness of
$\TAUT$ (see Theorem~\ref{thm_sat_np_complete}) ensures each subformula
of the form $(w, i, a) \inalt_{\STATEALL(A)}$ is polynomial-time many-one
reducible to an equivalent (in the sense of evaluating to the same truth value
under the respective interpretations; see Definition~\ref{def_sattaut})
$\SATTAUT$ formula $g \inalt_\TAUT$, $g$ being a $\TAUT$ instance.
    Similarly, each subformula $\lnot((w, i, a) \inalt_{\STATEALL(A)})$
is reducible to an equivalent formula $h \inalt_\SAT$.
    Since each of the $f_i(w)$ may contain only polynomially (respective to
$|w|$) many connectives, each is polynomial-time (many-one) reducible to an
equivalent $\SATTAUT$ instance $f_i'(w)$.

    By the definition of XCA (i.e.,
Definitions~\ref{def_xca}~and~\ref{def_xca_acc_beh}) and our choice of
$p$, $f'(w) = f_0'(w)$ is true if and only if $A$ accepts $w$.
    Since $f'(w)$ is such that $|f'(w)|$ is polynomial in $|w|$, this provides
a polynomial-time (many-one) reduction of $L$ to a problem instance of
$\SATTAUT \in \ttpNP$.
    The claim follows.
\qed
\end{proof}

This concludes the proof of Theorem~\ref{thm_XCAP_ttpNP}.

\subsection{A Turing Machine Characterization}
\label{sec_syncNTM}

We now turn to a closer investigation of the relation between XCA
polynomial-time computations and the class $\ttpNP$.
In this section, we shall view NTMs as a special case of alternating Turing
machines (ATMs), that is, as possessing a computation tree in which all branches
are existential.
Recall the computational strategy of the XCA in
Proposition~\ref{prop_NP_coNP_in_XCAP} essentially consists of creating multiple
computation branches, each corresponding to a possible variable assignment of
the input formula.
In a sense, this merely replicates the standard NTM construction used to show
$\SAT \in \NP$ (or, equivalently, $\TAUT \in \coNP$).

Nevertheless, it is widely suspected that $\XCAP = \ttpNP$ is a strictly larger
class than $\NP$, and it is a fair point to question exactly why it is that we
obtain such a class (instead of merely $\NP$).
The explanation ultimately lies in the acceptance condition of XCAs.
Consider that, for instance, the presence of a non-accepting cell prevents
acceptance; thus, by the automaton not halting, would-be accepting branches are
made aware of the existence of this cell.
This enables a form of information transfer between computation branches which
is not possible in NTMs.
In fact, this form of interaction is not exclusive to a model based on CAs but,
as we shall see, may also be expressed in terms of a model based on Turing
machines.

In the following definition, we extract the essence of this interaction and
embed it into the NTM model.
The novelty consists in a modification to the acceptance condition, which, as
is the case for XCAs (see Definition~\ref{def_xca_acc_beh}), requires a
simultaneous decision across all computation branches.
Unsurprisingly, the condition is that of a unanimous decision across the
branches (instead of a single branch being accepting) and actually resembles
more a characterization of $\coNP$ than of $\NP$ (by an NTM variant which
accepts if and only if all non-deterministic branches are accepting or,
equivalently, an ATM possessing only universal states).
However, note this by no means deviates from our goal, that is, defining a model
based (exclusively) on TMs that features the form of information transfer
discussed above.

\begin{definition}[SimulNTM]
  A \emph{simultaneous NTM} (\emph{SimulNTM}) is an NTM $T$ having the property
that, for any input $w$ of $T$, there is $t \in \Nz$ such that, in step $t$,
the computation branches of $T$ are either all accepting or all rejecting.
  Furthermore, if $t$ is minimal with this property, then $T$ \emph{accepts}
(resp., \emph{rejects}) if all branches in step $t$ are accepting (resp.,
rejecting).
  $\SimulNP$ denotes the class of languages decided by SimulNTMs in polynomial
time.
\end{definition}

Refer to Figure~\ref{fig_SimulNTM_example} for an example illustrating the
computation of a SimulNTM $T$ with accept state $a$ and reject state $r$.
Upon reaching step number $t'$, $T$ does not yet terminate since some of the
computation branches are accepting while some are still rejecting, that is,
there is no unanimity.
$T$ accepts in step $t$ since then all its branches are in state $a$ (assuming
this was not the case in any step prior to step $t$).

\begin{figure}
  \centering
  \begin{tikzpicture}[->, >=stealth', shorten >=1pt, auto, semithick,
      node distance=0.65cm]
    \node[state, initial above, initial text={}] (a) {};

    \node[state, xshift=-0.75cm] (b0) [below left =of a] {};
    \node[state, xshift=0.75cm] (b1) [below right =of a] {};

    \node[state] (c0) [below =of b0] {};
    \node[state] (c2) [below =of b1] {};
    \node[state] (c1) [left =of c2] {};
    \node[state] (c3) [right =of c2] {};

    \node (d0) [below =of c0] {$\vdots$};
    \node (d1) [below =of c1] {$\vdots$};
    \node (d2) [below =of c2] {$\vdots$};
    \node (d3) [below =of c3] {$\vdots$};

    \node[state, fill=red] (e2) [below =of d1] {$r$};
    \node[state, fill=green] (e3) [below =of d2] {$a$};
    \node[state, fill=red] (e4) [below =of d3] {$r$};
    \node[state, fill=green] (e1) [left =of e2] {$a$};
    \node[state, fill=green] (e0) [left =of e1] {$a$};

    \node (t') [left =of e0] {Step $t'$};

    \node (f0) [below =of e0] {$\vdots$};
    \node (f1) [below =of e1] {$\vdots$};
    \node (f2) [below =of e2] {$\vdots$};
    \node (f3) [below =of e3] {$\vdots$};
    \node (f4) [below =of e4] {$\vdots$};

    \node[state, fill=green] (g0) [below =of f0] {$a$};
    \node[state, fill=green] (g1) [below =of f1] {$a$};
    \node[state, fill=green] (g2) [below =of f2] {$a$};
    \node[state, fill=green] (g3) [below =of f3] {$a$};
    \node[state, fill=green] (g4) [below =of f4] {$a$};

    \node (t) at (t' |- g0) {Step $t$};

    \node (t0) at (t |- a) {Step $0$};
    \node (t1) at (t |- b0) {Step $1$};
    \node (t2) at (t |- c0) {Step $2$};

    \path (a)  edge (b0)
          (a)  edge (b1)

          (b0) edge (c0)
          (b1) edge (c1)
          (b1) edge (c2)
          (b1) edge (c3)

          (c0) edge (d0)
          (c1) edge (d1)
          (c2) edge (d2)
          (c3) edge (d3)

          (d0) edge (e0)
          (d0) edge (e1)
          (d1) edge (e2)
          (d2) edge (e3)
          (d3) edge (e4)

          (e0) edge (f0)
          (e1) edge (f1)
          (e2) edge (f2)
          (e3) edge (f3)
          (e4) edge (f4)

          (f0) edge (g0)
          (f1) edge (g1)
          (f2) edge (g2)
          (f3) edge (g3)
          (f4) edge (g4)

          ;
  \end{tikzpicture}
  \caption{Illustration of the operation of a SimulNTM.
    States other than $a$ or $r$ have been omitted for simplicity.}
  \label{fig_SimulNTM_example}
\end{figure}
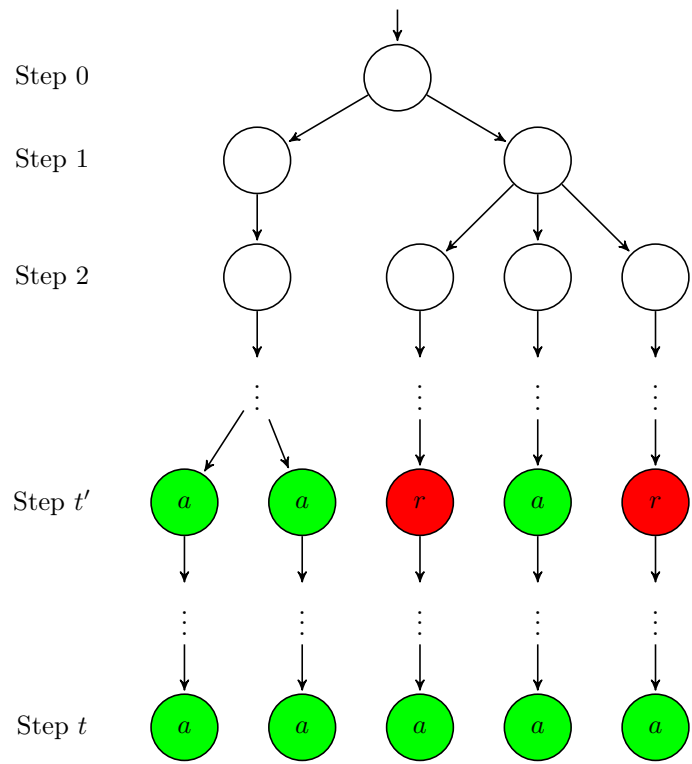

\begin{theorem}
  $\SimulNP = \XCAP = \ttpNP$.
  \label{thm_simulNP}
\end{theorem}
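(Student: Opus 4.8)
Since Theorem~\ref{thm_XCAP_ttpNP} already establishes $\XCAP = \ttpNP$, it suffices to prove $\SimulNP = \XCAP$, which I would do by showing the two inclusions separately. The guiding observation is that SimulNTMs and XCAs share the same \emph{unanimous} acceptance condition (a decision is made at the first step at which all branches, resp.\ all active cells, agree) — this is precisely the feature responsible for the equivalence, and both directions amount to transplanting constructions already developed for XCAs.

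For $\SimulNP \subseteq \XCAP$, I would simulate a polynomial-time SimulNTM $T$ by an XCA that runs one ``block'' per computation branch of $T$, reusing the configuration-duplication technique from the proof of Proposition~\ref{prop_NP_coNP_in_XCAP}: whenever a branch of $T$ makes a nondeterministic choice, the XCA duplicates the corresponding block, and all blocks are kept synchronized so that they perform the $i$-th simulated step of $T$ simultaneously. The cells of a block enter the accept (resp.\ reject) state exactly when the branch it simulates is in an accept (resp.\ reject) state, and are in a neutral state otherwise. Because both models decide according to the first step at which all branches (resp.\ cells) agree, the XCA accepts $w$ if and only if $T$ does, and, since $T$ reaches unanimity on every input, the XCA halts on every input. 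As $T$ runs for $p(n)$ steps there are at most $p(n)$ branching levels, each costing the XCA polynomially many steps — the exponential blow-up in the number of blocks being absorbed by the exponential cell growth permitted by Lemma~\ref{lem_xca_number_cells} — so the XCA runs in polynomial time.

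For $\XCAP \subseteq \SimulNP$, I would adapt the subconfiguration-tracking technique of Lemma~\ref{lem_stateall}. Given an XCA $A$ with polynomial time bound $p$, the SimulNTM keeps on each branch a window of size $1 + 2(p(n) - i)$ around a single active cell of $A$ in step $i$; it starts with one branch per input cell and uses its nondeterminism to follow cell ``splits'' (when an adjacent hidden cell becomes active, the branch nondeterministically continues to follow its old cell or one of the at most two newly active neighbors). The shrinking window size guarantees the center cell is always simulated correctly, and at step $i$ the branches track exactly (possibly with harmless redundancy) the active cells of $A$ in step $i$; the Turing-machine state of a branch is made equal to the XCA-state of its cell. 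Hence the branches first become unanimous precisely at the step in which $A$ halts, with the same verdict, so $L(A)$ is decided by a polynomial-time SimulNTM.

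The principal obstacle lies in the second construction: one must ensure branches only ever track \emph{active} cells and never quiescent ones — a branch sitting on a quiescent cell would report a neutral state and could destroy unanimity in a step where $A$ has in fact already halted; this also calls for the mild (standard) assumption that a CA decider never spawns an active cell inside the all-quiescent region. In the first construction the corresponding care is needed in keeping blocks exactly synchronized and in faithfully mirroring branches of $T$ that have already ``decided'' but keep running, without disturbing the unanimity bookkeeping — the same kind of delicate timing argument as in Propositions~\ref{prop_NP_coNP_in_XCAP}~and~\ref{prop_ttpNP_in_XCAP}. As an alternative to the second construction, the inclusion $\SimulNP \subseteq \ttpNP$ can be obtained more directly by mimicking Proposition~\ref{prop_XCAP_in_ttpNP}: define $\STATE_\forall(T)$ as the set of triples $(w,t,z)$ for which all branches of $T$ on input $w$ are in state $z$ after $t$ steps, observe that $\STATE_\forall(T) \in \coNP$ (a co-nondeterministic machine simulates a single branch and checks its state), and substitute this predicate into the recursive $\SATTAUT$ formula used there.
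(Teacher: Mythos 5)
Your proposal follows the paper's proof in essentially the same way: both inclusions are obtained by mutual polynomial-time simulation, with one synchronized block of cells per computation branch for $\SimulNP \subseteq \XCAP$, and the windowed non-deterministic cell-tracking of Lemma~\ref{lem_stateall} for $\XCAP \subseteq \SimulNP$. However, your handling of the two subtleties you yourself flag does not quite work, and both need the paper's specific fixes.

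In the first direction, having all cells of a block \emph{enter} the accept state destroys the block's recorded tape content of $T$ — yet a SimulNTM branch that is accepting at some step may have to keep computing because unanimity has not yet been reached. The resolution is to let each cell assume an accepting (resp.\ rejecting) \emph{variant} of its current state, i.e., to build a MAR-XCA, and then appeal to Theorem~\ref{thm_marxca}; a single accept state plus ``neutral otherwise'' is not enough. In the second direction, your plan to ensure branches ``only ever track active cells'' is not implementable: the covering argument of Lemma~\ref{lem_stateall} \emph{requires} branches centered on currently quiescent cells at step $i$, because such cells may become active by a later step and must already be covered. The workable fix is instead to let these branches vote harmlessly — stagger accepting and rejecting into two consecutive steps and have a quiescent-centered branch be accepting in the first and rejecting in the second, so it never blocks unanimity in either phase. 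Finally, your proposed ``alternative to the second construction'' (showing $\SimulNP \subseteq \ttpNP$ via a $\STATEALL(T)$-style predicate) actually replaces the \emph{first} inclusion, not the second; it is a perfectly sound — and arguably simpler — route for $\SimulNP \subseteq \ttpNP = \XCAP$, but it cannot substitute for constructing a SimulNTM from an XCA.
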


The proof uses techniques fairly similar to the previous ones in this section.

\begin{proof}
  The claim is shown by proving the two inclusions, both of which, in turn, are
proven by polynomial-time simulation of either model by the other one.

  For the inclusion $\SimulNP \subseteq \XCAP$, let $T$ be a SimulNTM whose
running time is bounded by a polynomial $p\colon \N \to \Nz$, to which we shall
construct a polynomial-time XCA $A$ with $L(A) = L(T)$.
  Strictly speaking, $A$ is not as in Definition~\ref{def_xca_acc_beh} since
it has multiple accept and reject states (i.e., $A$ is an MAR-XCA; see section
\ref{sec_MAR_XCA}); as mentioned in Section~\ref{sec_CA} and proven in
Theorem~\ref{thm_marxca}, however, this is equivalent to the original definition
(i.e., Definition~\ref{def_xca_acc_beh}).
  As is the case for ATMs, we may assume $T$ always creates one additional
branch in each step, that is, if its computation is viewed as a tree, then each
node has outdegree precisely $2$.

  $A$ maintains a separate \emph{block} of cells for each branch in the
computation of $T$.
  Each block contains the respective instantaneous configuration of $T$ and is
updated according to the rules of $T$.
  The simulation of $T$ is advanced every $m = m(b)$ steps, where $b$ denotes
the current length of the respective block and $m$ we shall yet specify.
  After each simulated step of $T$, one blank symbol is created on either end
of the represented configuration; this is so that $T$ has (theoretically)
unbounded space while ensuring any two blocks always have the same length.
  When the computation of $T$ creates an additional branch, the respective block
creates a copy of itself and updates it so as to reflect the instantaneous
configuration of the new branch (parallel to updating its own configuration).
  Additionally, if the head of $T$ becomes accepting or rejecting, the cell
representing it sends signals to the other cells in the block so that they
\emph{mark} themselves as such accordingly.
  Here, \enquote{mark} means the respective cell changes into a state in which
it behaves exactly the same way as before (i.e., as if it was not marked), only
this state is an accepting or rejecting state (as determined by the respective
state of $T$).
  Once all cells in a block have marked themselves, they wait for an additional
step (so that $A$ may possibly accept or reject), after which all cells in the
block are unmarked again.

  We now set $m$ to be the total number of steps required by the two
aforementioned procedures, that is, creating a new branch and (if applicable)
marking cells as accepting or rejecting and subsequently unmarking them.
  Note that $m \in \Theta(b)$ is computable in real-time (by a block) as a
function of $b$.
  As an aside, also note the entire procedure described above does not require
any synchronization between the blocks whatsoever since it consists solely of
operations that each require a fixed number of steps and, in addition, the
simulation is advanced every $m$ steps, which is also fixed.

  If the branches of $T$ are all accepting at the same time, then so are all
cells of $A$ (at the respective simulation step).
  The converse also holds:
  If the branches of $T$ all reject at the same time, then so do the cells of
$A$.
  In addition, because $m \in \Theta(b)$ and $b \in O(p(n))$ for an input of
length $n$, the running time of $A$ is polynomial, and $\SimulNP \subseteq
\XCAP$ follows.

  To prove the converse inclusion, given an XCA $A$ with running time bounded by
a polynomial $p\colon \N \to \Nz$, we construct a polynomial-time SimulNTM $T$
with $L(T) = L(A)$.
  Given an input $w$ for $A$, $T$ first sets $t = 0$ and then executes the
following procedure:
\begin{enumerate}
  \item Branch over all active cells of $A$ in time step $t$ (using, e.g., the
non-deterministic procedure described in the proof of Lemma~\ref{lem_stateall})
and compute the state $z$ of the cell that was chosen.
  \item If $z$ is the accept state of $A$, assume an accepting state for exactly
one computation step and then a non-rejecting state for exactly one step.
  If $z$ is the reject state of $A$, assume a non-accepting state followed by a
rejecting state.
  If $z$ is the quiescent state, assume an accepting state followed by a
rejecting state.
  If none of the cases above hold (i.e., $z$ is an active state that is neither
the accept nor the reject state), wait for two steps in a state that is neither
accepting nor rejecting.
  \item Increment $t$ and repeat.
\end{enumerate}

  Since the lengths of the configurations $c_i$ (see the proof of
Lemma~\ref{lem_stateall}) are the same regardless of how they are chosen, the
branches of $T$ can all be synchronized in their computation of the $c_i$ so
that they advance the simulation of the respective cell block at the same time
and, therefore, arrive at the respective state $z$ simultaneously.
  The subsequent instruction ensures $L(T) = L(A)$ since, if $A$ accepts (resp.,
rejects) its input in step $\tau$, then so do all branches of $T$ accept
(resp., reject) simultaneously for $t = \tau$, and the converse also holds.
  In addition, note that, by definition of $\STATEALL(A)$, $T$ is guaranteed to
halt since $(w, \tau, z) \in \STATEALL(A)$ must hold for some $\tau \le p(|w|)$
and $z \in \{ a, r \}$.
  Since $T$ is only slower than the NTM in the proof of Lemma~\ref{lem_stateall}
by a factor $O(p(|w|))$, it also runs in polynomial time.
  \qed
\end{proof}

\section{Immediate Implications}
\label{sec_implications}

This section covers some immediate corollaries of Theorem~\ref{thm_XCAP_ttpNP}
regarding XCA variants.
In particular, we address XCAs with multiple accept and reject states, followed
by XCAs with acceptance conditions differing from that in
Definition~\ref{def_xca_acc_beh}, in particular the two other classical
acceptance conditions for CAs \autocite{rosenfeld_book}.

\subsection{XCAs with Multiple Accept and Reject States}
\label{sec_MAR_XCA}

Recall the definition of an XCA specifies a single accept and a single reject
state (see Section~\ref{sec_CA}).
Consider XCAs with multiple accept and reject states.
As shall be proven, the respective polynomial-time class ($\MARXCAP$) remains
equal to $\XCAP$.
In the case of TMs, the equivalent result (i.e., TMs with a single accept and
a single reject state are as efficient as standard TMs) is trivial, but such is
not the case for XCAs.
Recall the acceptance condition of an XCA requires orchestrating the states of
multiple, possibly exponentially many cells.
In addition, an XCA with multiple accept states may, for instance, attempt to
accept whilst saving its current state (i.e., a cell in state $z$ may assume an
accept state $a_z$ while simultaneously saving state $z$).
Such is not the case for standard XCAs (i.e., as specified in
Definition~\ref{def_xca_acc_beh}), in which all accepting cells have
necessarily the same state.

\begin{definition}[MAR-XCA]
  A \emph{multiple accept-reject XCA} (\emph{MAR-XCA}) $A$ is an XCA with state
set $Q$ and which admits subsets $\Qacc, \Qrej \subseteq Q$ of accept and
reject states, respectively.
  $A$ accepts (resp., rejects) if its active cells all have states in $\Qacc$
(resp., $\Qrej$), and it halts upon accepting or rejecting.
  In addition, $A$ is required to either accept or reject its input after a
finite number of steps.
  $\MARXCAP$ denotes the MAR-XCA analogue of $\XCAP$.
\end{definition}

The following generalizes $\STATEALL$ (see Definition~\ref{def_stateall} and
Lemma~\ref{lem_stateall}) to the case of MAR-XCAs:

\begin{definition}[$\MARSTATEALL$]
  Let $A$ be an MAR-XCA with state set $Q$, and let $V_A$ be the set of triples
$(w, t, Z)$, $w$ being an input for $A$, $t \in \{0, 1\}^+$ a binary encoding of
$\tau \in \Nz$, and $Z \subseteq Q$.
  $\MARSTATEALL(A) \subseteq V_A$ is the subset of triples such that, if $A$ is
given $w$ as input, after $t$ steps all active cells have states in $Z$.
  \label{def_marstateall}
\end{definition}

\begin{lemma}
  For any MAR-XCA $A$ with polynomial time complexity, $\MARSTATEALL(A) \in
\coNP$.
  \label{lem_marstateall}
\end{lemma}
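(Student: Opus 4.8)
The plan is to mimic the proof of Lemma~\ref{lem_stateall} almost verbatim, since the only change is that a cell is now required to have \emph{some} state in a given set $Z$ rather than a single fixed state $z$. I would begin by letting $p\colon \N \to \Nz$ bound the time complexity of $A$ and then describe an NTM $T$ that, on input $(w, t, Z)$ with $\tau$ the number encoded by $t$, first checks $\tau \le p(|w|)$ (rejecting otherwise) and then uses exactly the non-deterministic subconfiguration construction from the proof of Lemma~\ref{lem_stateall}: it maintains subconfigurations $c_0, \dots, c_\tau$ of $A$ with $c_0 = q^{2\tau} w q^{2\tau}$ and the invariant $|c_i| = 1 + 2(\tau - i)$ for $i \ge 1$, obtained by applying the transition function of $A$, discarding the two boundary cells, and non-deterministically selecting a contiguous window of the right length. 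The only difference is in the final acceptance test: after computing the state $z'$ of the centre cell of $c_\tau$, the branch of $T$ accepts if and only if $z' \in Z$ (rather than $z' = z$).

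The correctness argument is then identical: by the same induction on $i$ (covering all size-$(1 + 2(\tau - i))$ subconfigurations of $q^{2(\tau - i)} z_1 \dotsm z_m q^{2(\tau - i)}$, split into the three cases of an already-active cell, a newly-activated hidden cell, and a quiescent cell within distance $\tau - i$ of the active region), $T$ covers every active cell of $A$ in step $\tau$. Hence all branches of $T$ accept iff every active cell of $A$ in step $\tau$ has a state in $Z$, which is precisely the statement $(w, t, Z) \in \MARSTATEALL(A)$. The complexity bound also carries over unchanged: the invariant $|c_i| = 1 + 2(\tau - i)$ keeps each step's work bounded by a multiple of $\tau \le p(|w|)$, the state set $Q$ is fixed so an individual state has $O(1)$ encoding and the set $Z \subseteq Q$ has $O(1)$ encoding as well, while $t$ has length $O(\log p(|w|)) = O(\log|w|)$; thus the instance $(w, t, Z)$ has length $O(|w|)$ and $T$ runs in polynomial time in $|w|$.

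Since essentially nothing new is required, I would keep the proof short: state that the construction and its analysis are as in the proof of Lemma~\ref{lem_stateall}, replacing the test \enquote{$z' = z$} by \enquote{$z' \in Z$}, and note that this does not affect the running time because $Z \subseteq Q$ with $Q$ finite still has constant-size encoding. The only point that warrants an explicit remark --- and the closest thing to an obstacle --- is confirming that the size of the problem instance is still $O(|w|)$: one must observe that $|Z|$ is bounded by the fixed constant $|Q|$, so encoding $Z$ costs only $O(1)$ symbols, exactly as the single state $z$ did in Lemma~\ref{lem_stateall}. With that observation in place, the conclusion $\MARSTATEALL(A) \in \coNP$ follows by the same reasoning as before: $T$ witnesses membership because all its branches accept precisely when the $\coNP$ condition holds.
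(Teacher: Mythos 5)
Your proposal is correct and matches the paper's proof exactly: the paper likewise just adapts the NTM from Lemma~\ref{lem_stateall} to accept iff the computed state lies in $Z$. Your additional remark that $Z \subseteq Q$ still has constant-size encoding is a sensible (if implicit in the paper) detail and changes nothing.
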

\begin{proof}
    Adapt the NTM from the proof of Lemma~\ref{lem_stateall} so as to accept if
and only if the last state is contained in $Z$.
    \qed
\end{proof}

Proceeding as in the proof of Proposition~\ref{prop_XCAP_in_ttpNP} (simply using
$\MARSTATEALL$ instead of $\STATEALL$) yields:
\begin{theorem}
  $\MARXCAP = \XCAP$.
  \label{thm_marxca}
\end{theorem}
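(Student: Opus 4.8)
The plan is to prove the equality $\MARXCAP = \XCAP$ by establishing both inclusions. The inclusion $\XCAP \subseteq \MARXCAP$ is immediate, since every XCA (in the sense of Definition~\ref{def_xca_acc_beh}) is trivially a MAR-XCA with $\Qacc = \{a\}$ and $\Qrej = \{r\}$. The substantive direction is $\MARXCAP \subseteq \XCAP$, and by Theorem~\ref{thm_XCAP_ttpNP} it suffices to show $\MARXCAP \subseteq \ttpNP$. For this I would mimic the argument of Proposition~\ref{prop_XCAP_in_ttpNP} verbatim, replacing the predicate $\STATEALL(A)$ with its MAR-analogue $\MARSTATEALL(A)$ from Definition~\ref{def_marstateall}, whose membership in $\coNP$ is already granted by Lemma~\ref{lem_marstateall}.

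Concretely: let $L \in \MARXCAP$ be decided by a MAR-XCA $A$ with state set $Q$, accept set $\Qacc$, reject set $\Qrej$, and time complexity bounded by a polynomial $p$. For an input $w$ of length $n$, I would define formulas $f_0(w), \dots, f_{p(n)}(w)$ exactly as in Proposition~\ref{prop_XCAP_in_ttpNP}, but using the two predicates $(w, i, \Qacc) \inalt_{\MARSTATEALL(A)}$ and $(w, i, \Qrej) \inalt_{\MARSTATEALL(A)}$ in place of $(w, i, a) \inalt_{\STATEALL(A)}$ and $(w, i, r) \inalt_{\STATEALL(A)}$. The recursion
$$f_i(w) = \lor\bigl((w, i, \Qacc) \inalt_{\MARSTATEALL(A)}, \; \land\bigl(\lnot((w, i, \Qrej) \inalt_{\MARSTATEALL(A)}), f_{i+1}(w)\bigr)\bigr)$$
for $i < p(n)$, with the analogous base case at $i = p(n)$, then satisfies: $f_0(w)$ is true if and only if there is a step $\tau \le p(n)$ in which all active cells of $A$ lie in $\Qacc$ and in which no earlier step had all active cells in $\Qrej$ — that is, if and only if $A$ accepts $w$. (Here I use that $A$ is a decider, so such a $\tau$ exists.) Lemma~\ref{lem_marstateall} and the $\coNP$-completeness of $\TAUT$ let me reduce each positive predicate to a $\TAUT$-instance and each negated predicate to a $\SAT$-instance, and since only polynomially many connectives appear, $f_0(w)$ reduces in polynomial time to a $\SATTAUT$-instance. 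This gives a polynomial-time many-one reduction from $L$ to $\SATTAUT \in \ttpNP$, so $L \in \ttpNP = \XCAP$.

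I do not expect any real obstacle here: the only point requiring a moment's care is verifying that the acceptance condition of a MAR-XCA (``all active cells in $\Qacc$'') is still captured correctly by the single predicate $\MARSTATEALL(A)$ with second argument the whole set $\Qacc$ — but this is exactly what Definition~\ref{def_marstateall} encodes, so the translation is mechanical. The heavier lifting (the $\coNP$ upper bound for the ``all cells in a given set'' predicate, via the subconfiguration-covering NTM) has already been discharged by Lemma~\ref{lem_marstateall}, which itself is a one-line adaptation of Lemma~\ref{lem_stateall}. Hence the proof is essentially a pointer to Proposition~\ref{prop_XCAP_in_ttpNP} with $\MARSTATEALL$ in place of $\STATEALL$, combined with the trivial reverse inclusion and Theorem~\ref{thm_XCAP_ttpNP}.
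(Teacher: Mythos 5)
Your proposal is correct and follows exactly the paper's argument: the reverse inclusion is the trivial observation that an XCA is a MAR-XCA with singleton accept and reject sets, and the forward inclusion repeats the reduction of Proposition~\ref{prop_XCAP_in_ttpNP} with $\MARSTATEALL$ in place of $\STATEALL$, relying on Lemma~\ref{lem_marstateall} for the $\coNP$ bound. No differences worth noting.
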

\begin{proof}
    Define formulas $f_i(w)$ as in the proof of
Proposition~\ref{prop_XCAP_in_ttpNP} while replacing $\STATEALL$ with
$\MARSTATEALL$, the accept state $a$ with the set $\Qacc$, and the reject state
$r$ with the set $\Qrej$.
    Lemma~\ref{lem_marstateall} guarantees the reductions to $\SATTAUT$ are all
efficient.
    Thus, $\MARXCAP \subseteq \ttpNP = \XCAP$.
    Since MAR-XCAs are a generalization of XCAs, the converse inclusion is
trivial.
    \qed
\end{proof}

\subsection{Existential XCA}

The remainder of this section is concerned with XCAs variants which use
the two other classical acceptance conditions for CAs \autocite{rosenfeld_book}.
The first is that of a single final state being present in the CA's
configuration sufficing for termination.
We use the term \emph{existential} as an allusion to the existential states of
ATMs.

\begin{definition}[EXCA]
  An \emph{existential XCA} (EXCA) is an XCA with the following acceptance
condition:
  If at least one of its cells is in the accept (resp., reject) state $a$
(resp., $r$), then the EXCA accepts (resp., rejects).
  The coexistence of accept and reject states in the same global configuration
is disallowed (and any machine contradicting this requirement is, by
definition, \emph{not} an EXCA).
  $\EXCAP$ denotes the EXCA analogue of $\XCAP$.
\end{definition}

Disallowing the coexistence of accept and reject states in the global
configuration of an EXCA is necessary to ensure a consistent condition for
acceptance.
An alternative would be to establish a priority relation between the two (e.g.,
an accept state overrules a reject one); nevertheless, this behavior can be
emulated by our chosen variant with only constant delay.
This is accomplished by introducing binary counters to delay state transitions
and assure, for instance, that accept and reject states exist only in
even- and odd-numbered steps, respectively.

\begin{theorem}
  $\EXCAP = \XCAP = \ttpNP$.
  \label{thm_EXCAP_in_ttpNP}
\end{theorem}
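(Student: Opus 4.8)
The plan is to reduce the claim to $\EXCAP = \ttpNP$, since $\XCAP = \ttpNP$ is already available from Theorem~\ref{thm_XCAP_ttpNP}, and then to establish the two inclusions $\EXCAP \subseteq \ttpNP$ and $\ttpNP \subseteq \EXCAP$ separately, in both cases reusing the machinery of Section~\ref{sec_XCAP}.

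For $\EXCAP \subseteq \ttpNP$ I would mimic Lemma~\ref{lem_stateall} and Proposition~\ref{prop_XCAP_in_ttpNP}. Define, for a polynomial-time EXCA $A$, the language $\STATEONE(A)$ of triples $(w,t,z)$ such that, running $A$ on $w$, at least one active cell is in state $z$ after $t$ steps. Taking the covering NTM from the proof of Lemma~\ref{lem_stateall}, but having it accept as soon as some branch produces a cell in state $z$ (so that one accepting branch suffices), shows $\STATEONE(A) \in \NP$. Since a polynomial-time EXCA halts exactly at the first step at which some cell is in $a$ or in $r$, and since (by definition) these two states never coexist, \enquote{$A$ accepts $w$} is equivalent to a polynomial-size Boolean combination of the predicates $(w,i,a)\inalt_{\STATEONE(A)}$ and $(w,i,r)\inalt_{\STATEONE(A)}$ for $i \le p(|w|)$, where $p$ bounds the running time of $A$ --- namely: there is a least $i$ at which one of the two appears, and at that step one appears in $a$. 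Exactly as in Proposition~\ref{prop_XCAP_in_ttpNP}, each such $\NP$ (resp.\ $\coNP$) subpredicate is polynomial-time many-one reducible to an equivalent $\SAT$ (resp.\ $\TAUT$) instance, hence the whole formula reduces to a $\SATTAUT$ instance, and $\ttpNP$-membership follows via Theorem~\ref{thm_tautsat_ttpNP}.

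For $\ttpNP \subseteq \EXCAP$, since $\SATTAUT$ is $\ttpNP$-complete and polynomial-time many-one reductions by TMs are simulated by XCAs (hence EXCAs) in real time, it suffices to construct a polynomial-time EXCA for $\SATTAUT$. I would adapt the XCA of Proposition~\ref{prop_ttpNP_in_XCAP}: keep the same recursive, block-structured, fully synchronized evaluation of the input formula, but wherever that construction has a cell enter the accept or the reject state, route the transition through the constant-delay emulation of an \enquote{accept overrules reject} priority discussed above --- binary step counters forcing accept states into even-numbered steps and reject states into odd-numbered ones --- which both realizes that priority and keeps accept and reject states from ever appearing together, so that the resulting automaton is genuinely an EXCA. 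The essential change compared to the unanimous case is that resolved (\enquote{dead}) computation branches must be made to settle into a permanently neutral, non-halting state rather than being allowed to oscillate between $a$ and $r$: under the existential acceptance condition a single dead cell caught in such an oscillation would force a spurious, premature halt. Because the blocks remain synchronized, the still-\enquote{active} evaluation frontier acts in unison, so at the decisive step all of its cells are simultaneously in $a$ (resp.\ $r$) precisely when the formula is true (resp.\ false), and the EXCA halts with the correct verdict.

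The step I expect to be the main obstacle is exactly this adaptation of the \enquote{unanimity trick} of Proposition~\ref{prop_ttpNP_in_XCAP} to the existential setting. In the unanimous model a single non-accepting cell merely postpones acceptance, so dead branches are harmless no matter what they do; in the existential model any stray accept or reject cell triggers an immediate decision, so the construction must guarantee that no resolved branch ever emits $a$ or $r$ out of turn, while still ensuring that once all live branches have resolved some cell does emit the correct symbol --- and one must verify throughout the computation that the non-coexistence requirement is never violated.
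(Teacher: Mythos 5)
Your first half, $\EXCAP \subseteq \ttpNP$, is exactly the paper's argument: it introduces $\STATEONE(A)$, observes that the covering NTM of Lemma~\ref{lem_stateall} places it in $\NP$ (Lemma~\ref{lem_stateallo}), and then reruns the formula construction of Proposition~\ref{prop_XCAP_in_ttpNP} with $\STATEONE$ in place of $\STATEALL$. The problem is in the other inclusion, $\ttpNP \subseteq \EXCAP$.

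You correctly identify the obstacle --- resolved branches under the existential acceptance condition --- but the fix you propose (send dead branches into a permanently neutral, non-halting state) does not work if you keep deciding $\SATTAUT$. Consider the outermost connective $\land(f_1 \inalt_\SAT, f')$. If $f_1 \in \SAT$, a branch whose assignment falsifies $f_1$ must not emit a reject state (a single rejecting cell would wrongly reject the whole automaton), so going permanently neutral is forced. But an individual branch cannot tell whether some other branch succeeded: if in fact $f_1 \notin \SAT$, then \emph{every} branch falsifies $f_1$, all of them go permanently neutral, and your EXCA never halts, even though it is required to reject. Detecting ``all assignments fail'' is a universal condition that the existential acceptance cannot express through any local rule for the dead branches; this is the same information-transfer phenomenon the paper discusses after Proposition~\ref{prop_NP_coNP_in_XCAP}, with the polarity reversed. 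The paper's resolution is to decide the dual $\ttpNP$-complete problem $\TAUTSAT$ (exchange $\SAT$ and $\TAUT$ in Definition~\ref{def_sattaut}) instead: for $\land(f_1 \inalt_\TAUT, f')$, a branch that falsifies $f_1$ is by itself a valid witness that the conjunction is false, so its immediate existential rejection is correct, and whenever the computation is to continue \emph{all} branches continue --- dead branches never arise, so the obstacle you flag disappears rather than having to be managed. (Equivalently, one may build the EXCA for the complement and swap accept and reject states.) Without this dualization, or an equivalent device you have not supplied, your construction fails to halt on inputs whose first $\SAT$-predicate is unsatisfiable, so the second inclusion does not go through as written.
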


Note this is an equivalence between two disparately complex acceptance
conditions:
As specified in Definition~\ref{def_xca_acc_beh}, all cells of an XCA must
agree on the final decision; on the other hand, in an EXCA, a single, arbitrary
cell suffices.
We ascribe this phenomenon to $\XCAP = \ttpNP$ being equal to its complementary
class.

As for the proof of Theorem~\ref{thm_EXCAP_in_ttpNP}, first note that
Proposition~\ref{prop_ttpNP_in_XCAP} may easily be restated in the context of
EXCAs:
\begin{proposition}
  $\ttpNP \subseteq \EXCAP$.
  \label{prop_ttpNP_in_EXCAP}
\end{proposition}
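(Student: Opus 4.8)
The plan is to restate the proof of Proposition~\ref{prop_ttpNP_in_XCAP} in the EXCA setting. Since many-one reductions computable by Turing machines are simulable by EXCAs in real time (just as by XCAs), it suffices to exhibit a polynomial-time EXCA deciding the $\ttpNP$-complete language $\SATTAUT$ (Theorem~\ref{thm_tautsat_ttpNP}). I would reuse the recursive-evaluation strategy of Proposition~\ref{prop_ttpNP_in_XCAP}: given $f = \land(f_1\inalt_\SAT, \lor(f_2\inalt_\TAUT, f'))$, the automaton spawns a computation block for each truth assignment of $f_1$ (emulating the $\SAT$-XCA of Proposition~\ref{prop_NP_coNP_in_XCAP}), with the satisfying blocks proceeding to evaluate the remaining connectives while the others are marked as dead; it proceeds recursively in the same way for $f_2\inalt_\TAUT$ (emulating the $\TAUT$-XCA, with accept and reject states exchanged), and so on down the chain $f$. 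All blocks stay synchronized exactly as in Proposition~\ref{prop_ttpNP_in_XCAP}, so their verdicts are produced simultaneously.

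The only genuine change concerns how the final verdict is signalled. Whereas the XCA broadcasts its decision to every cell and relies on unanimity, the EXCA's acceptance is existential, so the verdict is carried by a single cell; and since the coexistence of accept and reject states is forbidden, I would install the constant-delay priority mechanism described after Theorem~\ref{thm_EXCAP_in_ttpNP} — binary counters pinning accept states to, say, even-numbered steps and reject states to odd-numbered ones — to resolve the conflict. During the computation itself no genuine accept or reject state is produced: all intermediate ``decision'' states (in particular, those used by dead blocks and by the alternation trick of Proposition~\ref{prop_ttpNP_in_XCAP}) are primed copies that behave identically but are unprimed — by a single cell, at the synchronized decision step — only once the overall verdict is known. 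Because $\ttpNP$ is closed under complement (indeed this is exactly the phenomenon underlying $\EXCAP = \XCAP$ noted after Theorem~\ref{thm_EXCAP_in_ttpNP}), the polarity of this final assignment is immaterial, so one is free to choose whichever makes the bookkeeping cleanest. The running time stays polynomial for the same reasons as in Proposition~\ref{prop_ttpNP_in_XCAP}, since the extra counters and delays add only a constant (or at most polynomial) overhead; this yields $\ttpNP \subseteq \EXCAP$.

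The step I expect to be the main obstacle is ensuring the EXCA never halts prematurely. Unlike the unanimous condition of an XCA, which automatically ignores any non-participating cell, the existential condition of an EXCA means that a single stray accept or reject state — say, one left behind by a dead block or by an as-yet-unfinished sub-evaluation — would trigger a spurious decision. The burden of the proof is therefore to arrange the primed-state and delay machinery so that (i) no genuine accept or reject state ever surfaces before the synchronized moment at which all relevant blocks have reported their verdict, and (ii) at that moment the verdict-carrying cell's state has the correct polarity with respect to the even/odd pinning. Verifying that this can be done while keeping all blocks synchronized and the whole computation polynomial-time is the crux; everything else is a routine transcription of the construction in Proposition~\ref{prop_ttpNP_in_XCAP}.
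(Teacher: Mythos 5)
There is a genuine gap, and it sits exactly where you locate the ``main obstacle.'' You propose to build an EXCA for $\SATTAUT$ by transcribing the construction of Proposition~\ref{prop_ttpNP_in_XCAP}, but the polarity of that problem's predicates is incompatible with the existential acceptance condition. Consider the conjunctive predicate $\land(f_1 \inalt_\SAT, \dots)$: the decisive event that must trigger rejection is $f_1 \notin \SAT$, i.e., \emph{all} of the (exponentially many) assignment blocks evaluating to false. In your scheme those blocks are all marked dead, and no dead block may enter a genuine reject state (otherwise a single falsifying assignment would cause a spurious rejection even when $f_1 \in \SAT$); so when $f_1 \notin \SAT$ nothing ever surfaces an accept or reject state and the machine never halts. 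The phrase ``unprimed by a single cell once the overall verdict is known'' hides the problem: that verdict is a universal statement about exponentially many blocks spread over exponentially many cells, and an EXCA has no mechanism to aggregate it in polynomial time --- in the unanimous XCA it is the acceptance condition itself that performs this exponential fan-in for free. The same difficulty recurs dually at $\lor(f_2 \inalt_\TAUT, \dots)$, where accepting requires witnessing that \emph{all} blocks evaluate $f_2$ to true. Closure of $\ttpNP$ under complement does not rescue this, since negating the final answer does not turn these universal events into existential ones.

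The paper's proof makes precisely the adaptation you are missing: it targets the dual $\ttpNP$-complete problem $\TAUTSAT$ (obtained by exchanging $\SAT$ and $\TAUT$ in Definition~\ref{def_sattaut}) rather than $\SATTAUT$. For $\land(f_1 \inalt_\TAUT, \dots)$ the decisive reject event is the existence of a single falsifying assignment, and for $\lor(f_2 \inalt_\SAT, \dots)$ the decisive accept event is the existence of a single satisfying assignment; in both cases one block can witness the event on its own and a single cell entering a final state suffices, which is exactly what the existential condition provides, while the non-decisive branches simply continue evaluating the formula. Your argument becomes correct once you replace $\SATTAUT$ by $\TAUTSAT$ and let the falsifying (resp.\ satisfying) blocks reject (resp.\ accept) immediately; the synchronization, the even/odd pinning of accept and reject states, and the polynomial-time accounting then go through essentially as you describe.
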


\begin{proof}
  By adapting the XCA $A$ for $\SATTAUT$ from the proof of
Proposition~\ref{prop_ttpNP_in_XCAP}, we obtain a polynomial-time EXCA $B$ for
$\TAUTSAT$.
  Here, $\TAUTSAT$ is the problem analogous to $\SATTAUT$ and which is obtained
simply by exchanging \enquote{$\TAUT$} and \enquote{$\SAT$} in
Definition~\ref{def_sattaut}.
  As $\SATTAUT$, it is straightforward to show $\TAUTSAT$ is $\ttpNP$-complete
(see also Theorem~\ref{thm_tautsat_ttpNP}).

  To evaluate a predicate of the form $f \inalt_\TAUT$, $B$ proceeds as $A$ and
emulates the behavior of the XCA deciding $\TAUT$ (see
Proposition~\ref{prop_NP_coNP_in_XCAP}); however, unlike $A$, the computation
branches of $B$ which evaluate to false reject immediately while it is those
that evaluate to true that continue evaluating the input formula.
  As a result, if $f \in \TAUT$, all branches of $B$ evaluate to true and
continue evaluating the input in a synchronous manner; otherwise, there is a
branch evaluating to false, and, since a single rejecting cell suffices for it
to reject, $B$ rejects immediately.
  The evaluation of $f \inalt_\SAT$ is carried out analogously.

  The modifications to $A$ to obtain $B$ do not impact its time complexity
whatsoever; thus, $B$ also has polynomial time complexity.
  \qed
\end{proof}

For the converse inclusion, consider the following $\NP$ analogue of the
$\STATEALL$ language (cf.\ Definition~\ref{def_stateall} and
Lemma~\ref{lem_stateall}):
\begin{definition}[$\STATEONE$]
  Let $A$ be an XCA and $V$ be the set of triples $(w, t, z)$ as in
Definition~\ref{def_stateall}.
  $\STATEONE \subseteq V$ is the subset of triples such that, for the input $w$,
after $t$ steps \emph{at least one} of the active cells of $A$ is in state $z$.
  \label{def_stateone}
\end{definition}

\begin{lemma}
  For any XCA $A$ with polynomial time complexity, $\STATEONE(A) \in \NP$.
  \label{lem_stateallo}
\end{lemma}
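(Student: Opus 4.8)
The plan is to reuse the machinery developed in Lemma~\ref{lem_stateall} almost verbatim, changing only the acceptance criterion of the non-deterministic Turing machine so that it witnesses an \emph{existential} rather than a \emph{universal} statement about the cells of $A$. Recall that the NTM $T$ constructed there branches (using non-determinism only to pick which cells of each subconfiguration $c_i$ survive into $c_{i+1}$) so that every active cell of $A$ in step $\tau$ is \emph{covered} by at least one computation branch, and each branch computes the state $z'$ of the cell it tracks. For $\STATEONE(A)$ the machine should simply accept if and only if $z' = z$ for the cell it is tracking; that is, instead of requiring \emph{all} branches to accept (which corresponds to a $\coNP$-type acceptance), we let $T$ accept whenever \emph{some} branch reports the target state $z$. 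Then $T$ accepts $(w, t, z)$ if and only if, after $t$ steps of $A$ on input $w$, at least one active cell is in state $z$, which is exactly membership in $\STATEONE(A)$; hence $\STATEONE(A) \in \NP$.

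Concretely, I would: (i) invoke the construction of $T$ from the proof of Lemma~\ref{lem_stateall} — the initial subconfiguration $c_0 = q^{2\tau} w q^{2\tau}$, the invariant $|c_i| = 1 + 2(\tau - i)$, and the non-deterministic selection of $c_{i+1}$ from $c_{i+1}'$ — observing that none of this depends on how the final decision is made; (ii) change the final step so that $T$ enters its accepting state precisely when the tracked cell's state equals $z$ (and enters a rejecting state otherwise), and also reject uniformly whenever $t > p(|w|)$, where $p$ bounds the time complexity of $A$; (iii) note the complexity analysis is unchanged: each branch runs in time polynomial in $\tau \le p(|w|)$, hence polynomial in $|w|$ since the triple $(w, t, z)$ has length $O(|w|)$; (iv) verify correctness by the same covering argument — if some active cell $r$ is in state $z$ in step $\tau$, then by the covering invariant (the inductive claim that $T$ covers all subconfigurations of $q^{2(\tau-i)} z_1 \dotsm z_m q^{2(\tau-i)}$ of size $1 + 2(\tau-i)$) there is a branch of $T$ that picks $s_\tau(r)$ as its $c_\tau$, computes the state $z$ for $r$, and accepts; conversely, if no active cell is in state $z$, then every branch tracks a cell whose state is not $z$, so every branch rejects, and $T$ rejects.

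I do not expect a genuine obstacle here: the only subtlety is that the covering property of $T$ is a one-sided guarantee (every active cell is covered by \emph{some} branch, but a branch may also track a cell that is not active — e.g. a quiescent boundary cell). For the $\coNP$ statement in Lemma~\ref{lem_stateall} this was handled by ensuring branches that would track only quiescent cells are excluded; for the $\NP$ statement it is even more benign, since a branch tracking a non-matching (in particular quiescent) cell simply rejects and does not affect the existential acceptance. The one point worth stating explicitly is that $z$ in Definition~\ref{def_stateone} is an \emph{active} state (the target of the predicate is typically the accept or reject state of $A$), so a branch reporting $z$ necessarily witnesses an active cell; if one wishes to allow $z = q$ as well, $T$ can additionally check that the cell it picked lies within the convex hull of the active region, which is decidable from the subconfigurations it has already computed. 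Thus the adaptation is routine, and the proof reduces to "adapt the NTM from the proof of Lemma~\ref{lem_stateall} so as to accept if and only if some tracked cell's state equals $z$."
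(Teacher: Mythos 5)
Your proposal is correct and follows essentially the same route as the paper: the paper's proof is a two-line observation that the NTM $T$ from Lemma~\ref{lem_stateall}, read with the standard existential ($\NP$) acceptance, has an accepting branch exactly when some active cell of $A$ is in state $z$ at step $\tau$. Your additional remarks on branches tracking quiescent cells and on the case $z = q$ are sensible refinements of details the paper leaves implicit, but they do not change the argument.
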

\begin{proof}
    Consider the NTM $T$ from Lemma~\ref{lem_stateall} and notice that, if any
of the active cells of $A$ in step $\tau$ have state $z$, then $T$ will have at
least one accepting branch; otherwise, none of the active cells of $A$ in step
$\tau$ have state $z$; thus, all branches of $T$ are rejecting.
    \qed
\end{proof}

Using Lemma~\ref{lem_stateallo} to proceed as in
Proposition~\ref{prop_XCAP_in_ttpNP} yields the following, from which
Theorem~\ref{thm_EXCAP_in_ttpNP} follows:
\begin{proposition}
  $\EXCAP \subseteq \ttpNP$.
  \label{prop_EXCAP_in_ttpNP}
\end{proposition}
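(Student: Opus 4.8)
The plan is to mirror the proof of Proposition~\ref{prop_XCAP_in_ttpNP}, replacing the unanimous state language $\STATEALL$ by its existential counterpart $\STATEONE$ (which, by Lemma~\ref{lem_stateallo}, lies in $\NP$ rather than $\coNP$). Let $L \in \EXCAP$ and let $A$ be an EXCA for $L$ whose time complexity is bounded by a polynomial $p\colon \N \to \Nz$. Fixing an input $w$ of length $n$, writing $V_A$ for the triple set of Definition~\ref{def_stateone}, and setting $V = V_A \cdot \{\inalt_{\STATEONE(A)}\}$ (with $\inalt_{\STATEONE(A)}$ a syntactic symbol standing for membership in $\STATEONE(A)$; cf.\ Definition~\ref{def_sattaut}), I would define $f_0(w), \dots, f_{p(n)}(w) \in \BOOL_V$ recursively by
\begin{align*}
  f_i(w) = \lor( \, &(w, i, a) \inalt_{\STATEONE(A)}, \\
          &\land( \, \lnot( \, (w, i, r) \inalt_{\STATEONE(A)} \, ),
                    f_{i+1}(w) \, ) \, )
\end{align*}
for $i < p(n)$, and
\begin{align*}
  f_{p(n)}(w) = \lor( \, &(w, p(n), a) \inalt_{\STATEONE(A)}, \\
          &\lnot( \, (w, p(n), r) \inalt_{\STATEONE(A)} \, ) \, ).
\end{align*}

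Next I would verify that $f_0(w)$ is true (under the interpretation sending each predicate to the truth value of the corresponding membership statement) precisely when $A$ accepts $w$. The crucial point is that a valid EXCA never has accept and reject states coexisting in a global configuration, so in every step $i$ at most one of $(w,i,a)\inalt_{\STATEONE(A)}$ and $(w,i,r)\inalt_{\STATEONE(A)}$ holds---the exact analogue of the mutual exclusivity of ``all accept'' and ``all reject'' exploited in Proposition~\ref{prop_XCAP_in_ttpNP}. Hence: if some cell is in the accept state in step $i$, then $A$ halts and accepts in step $i$ and $f_i(w)$ is true; if some cell is in the reject state (and therefore none in the accept state), then $A$ halts and rejects and $f_i(w)$ is false; otherwise $A$ has not yet halted and $f_i(w) = f_{i+1}(w)$. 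Since $A$ halts within $p(n)$ steps, this recursion bottoms out at the first halting step $\tau \le p(n)$ and returns the decision of $A$, with the base case $f_{p(n)}(w)$ handling the boundary $\tau = p(n)$.

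It then remains to convert $f_0(w)$ into an instance of a $\ttpNP$-complete problem. By Lemma~\ref{lem_stateallo} and the $\NP$-completeness of $\SAT$ (Theorem~\ref{thm_sat_np_complete}), each predicate $(w,i,a)\inalt_{\STATEONE(A)}$ is polynomial-time many-one reducible to an equivalent predicate $h \inalt_\SAT$ with $h$ a $\SAT$ instance, while each $\coNP$ predicate $\lnot((w,i,r)\inalt_{\STATEONE(A)})$ reduces to an equivalent $g \inalt_\TAUT$. As each $f_i(w)$ has only polynomially many connectives, $f_0(w)$ is thus reducible in polynomial time to an equivalent formula $f'(w)$ of size polynomial in $|w|$ whose shape is exactly that prescribed by $\TAUTSAT$ (the dual of $\SATTAUT$; cf.\ the proof of Proposition~\ref{prop_ttpNP_in_EXCAP}) and which is true if and only if $A$ accepts $w$. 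Since $\TAUTSAT$ is $\ttpNP$-complete, this gives $L \in \ttpNP$; combining with Proposition~\ref{prop_ttpNP_in_EXCAP} then yields Theorem~\ref{thm_EXCAP_in_ttpNP}.

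I expect this proof to be a largely mechanical transcription of Proposition~\ref{prop_XCAP_in_ttpNP}, so the main thing to get right is the correctness argument for the recursion under the existential acceptance condition---specifically, leaning on the no-coexistence requirement to recover the dichotomy that made the unanimous case work. A minor, purely cosmetic point is that the resulting formula now naturally falls into $\TAUTSAT$ instead of $\SATTAUT$; since both are $\ttpNP$-complete, this does not affect the conclusion.
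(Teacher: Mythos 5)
Your proposal is correct and is exactly the argument the paper intends: the paper proves this proposition in one line by saying to proceed as in Proposition~\ref{prop_XCAP_in_ttpNP} with $\STATEONE$ (Lemma~\ref{lem_stateallo}) in place of $\STATEALL$, which is precisely what you carry out. Your added observations---that the no-coexistence requirement of EXCAs supplies the dichotomy needed for the recursion's correctness, and that the resulting formula lands in $\TAUTSAT$ rather than $\SATTAUT$---are accurate fillings-in of details the paper leaves implicit.
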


\subsection{One-Cell-Decision XCA}
\label{sec_1XCA}

We turn to the discussion of XCAs whose acceptance condition is defined in terms
of a distinguished cell which directs the automaton's decision, considered the
standard acceptance condition for CAs \autocite{kutrib_language_theory}.
This condition is similar to the existential variant in the sense that the
automaton's termination is triggered by a single cell entering a final state.
The difference is that, here, the position of this cell is fixed.

We consider only the case in which the decision cell is the leftmost active cell
in the initial configuration (i.e., cell $0$).
By a \emph{one-cell-decision XCA} (\emph{1XCA}) we refer to an XCA which
accepts if and only if $0$ is in the accept state and rejects if and only if
cell zero is in the reject state.
Let $\oXCAP$ denote the polynomial-time class of 1XCAs.

The position of the decision cell is fixed; with a polynomial-time restriction
in place, it can only communicate with cells which are a polynomial (in
the length of the input) number of steps apart.
As a result, despite a 1XCA being able to efficiently increase its number
of active cells exponentially (see Lemma~\ref{lem_xca_number_cells}), any cells
impacting its decision must be at most a polynomial number of cells away from
the decision cell.
Thus:
\begin{theorem}
  $\oXCAP = \PTIME$.
  \label{thm_1XCAP_P}
\end{theorem}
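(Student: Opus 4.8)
The plan is to prove the two inclusions separately, with the easy direction being $\PTIME \subseteq \oXCAP$ and the substantive direction being $\oXCAP \subseteq \PTIME$. For $\PTIME \subseteq \oXCAP$, I would invoke the fact recalled in Section~\ref{sec_CA} that a CA can simulate a Turing machine in real time, hence any language in $\PTIME$ is decided by a polynomial-time (standard) CA; such a CA is in particular a 1XCA (one that simply never creates hidden cells), with the decision cell $0$ reporting the TM's verdict. The only mild care needed is to convert the usual CA acceptance convention into the 1XCA convention of Definition — have cell $0$ enter a dedicated accept state $a$ (resp.\ reject state $r$) exactly when the simulated TM halts accepting (resp.\ rejecting), and otherwise never use $a$ or $r$; this costs only constant overhead.

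For the converse $\oXCAP \subseteq \PTIME$, fix a 1XCA $A$ with time bound a polynomial $p\colon \N \to \Nz$, and let $w$ be an input of length $n$. The key observation is a light-cone / locality argument: the state of cell $0$ of $A$ after $t$ steps depends only on the states after $t-1$ steps of cell $0$ and its immediate left and right neighbors (in the current, possibly expanded, configuration), and iterating this, it depends only on a neighborhood of radius $t$ around cell $0$ in the step-$0$ configuration — but the step-$0$ configuration is exactly $w$ surrounded by quiescent cells, since no hidden cell has yet been promoted. More precisely, I would argue that to compute the state of cell $0$ after $\tau \le p(n)$ steps, it suffices to simulate $A$ restricted to a sub-configuration of width $1 + 2\tau$ centered at cell $0$ (padding with quiescent cells on both ends), exactly in the spirit of the subconfiguration bookkeeping used in the proof of Lemma~\ref{lem_stateall} and Figure~\ref{fig_lem_stateall_subconfig_simulation}: start with $q^{2\tau} w' q^{2\tau}$ where $w'$ is the relevant window of $w$, apply $\Delta^{\Exp}$ locally, and discard the two boundary cells at each step so that after $i$ steps one is tracking a window of width $1 + 2(\tau - i)$ that provably still contains (the descendants of) cell $0$. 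Each such window has length $O(p(n))$, so each simulated step costs $O(p(n))$ time on a deterministic TM, and $O(\tau) = O(p(n))$ steps suffice; the total is polynomial. The TM accepts iff the tracked cell-$0$ state first becomes $a$ before it becomes $r$, within $p(n)$ steps (and since $A$ is a decider, one of these must happen).

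The main obstacle — and the step I would spend the most care on — is justifying that the width-$(1 + 2\tau)$ window around cell $0$ is genuinely self-contained despite the dynamic creation of cells: one must check that a hidden cell promoted to active inside the window has both of its parents inside the window (so its new state is determined by data we are tracking), and that the re-indexing caused by $\Phi$ deleting $\odot$-cells does not let "outside" influence leak in faster than one cell per step. This is exactly the content of the correctness argument in Lemma~\ref{lem_stateall}, specialized here to the single cell $0$ rather than to all active cells, so I would present it as a short induction on the step number $i$ showing the invariant "the window tracked after $i$ steps equals the true width-$(1 + 2(\tau-i))$ subconfiguration of $A$ centered on the descendant of cell $0$," and then remark that $i = \tau$ gives the state of cell $0$ after $\tau$ steps. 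Combining the two inclusions yields $\oXCAP = \PTIME$. \qed
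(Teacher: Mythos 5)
Your proposal is correct and follows essentially the same route as the paper: the inclusion $\PTIME \subseteq \oXCAP$ is dismissed as immediate from real-time TM simulation, and the converse is obtained by running the subconfiguration-tracking simulation of Lemma~\ref{lem_stateall} deterministically, always re-centering the polynomial-width window on cell $0$ and reading off its state. Your added care about the window remaining self-contained under cell creation is exactly the correctness argument already established in that lemma, so nothing further is needed.
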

\begin{proof}
    The inclusion $\oXCAP \supseteq \PTIME$ is trivial.
    For the converse, recall the construction of the NTM $T$ in
Lemma~\ref{lem_stateall}.
    $T$ can be modified so that it works deterministically and always chooses
the next configuration $c_{i+1}$ from $c_i$ by selecting cell zero as the
middle cell.
    If cell zero is accepting, then $T$ accepts immediately; if it is
rejecting, then $T$ also rejects immediately.
    This yields a simulation of a 1XCA by a (deterministic) TM which is only
polynomially slower, thus implying $\oXCAP \subseteq \PTIME$.
    \qed
\end{proof}

\section{Conclusion}
\label{sec_conclusion}

This paper summarized the results of \textcite{MA} and also presented related
and previously unpublished results from \textcite{BA}.
The main result was the characterization $\XCAP = \ttpNP$
(Theorem~\ref{thm_XCAP_ttpNP}) in Section~\ref{sec_XCAP}, which also gave an
alternative characterization based on NTMs (Theorem~\ref{thm_simulNP}).
In Section~\ref{sec_implications}, XCAs with multiple accept and reject states
were shown to be equivalent to the original model (Theorem~\ref{thm_marxca}).
Also in Section~\ref{sec_implications}, two other variants based on varying
acceptance conditions were considered: the existential (EXCA), in which a
single, though arbitrary cell may direct the automaton's response; and the
one-cell-decision XCA (1XCA), in which a fixed cell does so.
In the first case, it was shown that the polynomial-time class $\EXCAP$ equals
$\XCAP$ (Theorem~\ref{thm_EXCAP_in_ttpNP}); in the latter, it was shown that the
polynomial-time class $\oXCAP$ of 1XCAs equals $\PTIME$
(Theorem~\ref{thm_1XCAP_P}).

This paper has covered some XCA variants with diverse acceptance conditions.
A topic for future work might be considering further variations in this sense
(e.g., XCAs whose acceptance condition is based on \emph{majority} instead
of \emph{unanimity}, which appears to lead to a model whose polynomial-time
class equals $\PP$).
Another avenue of research lies in restricting the capabilities of XCAs and
analyzing the effects thereof (e.g., restricting 1XCAs or SXCAs to a polynomial
number of cells).
A final open question is determining what polynomial speedups, if any, 1XCAs
provide with respect to 1CAs.

\subsection*{Acknowledgements}
  I thank Thomas Worsch for his mentoring, encouragement, and support during the
writing of this paper.
  I would also like to thank Dennis Hofheinz for pointing out a crucial mistake
in a preliminary version of this paper as well as the anonymous referees for
their valuable remarks and suggestions.

\begin{refcontext}[sorting=nyt]
\printbibliography
\end{refcontext}

\end{document}